




\documentclass[10pt,a4paper]{article}                                                    



\linespread{1.0}                                                                         
\usepackage{setspace}                                                                    

\usepackage[top=2.0cm,bottom=1.9cm,left=1.5cm,right=1.5cm]{geometry}                     

\usepackage{layouts}                                                                     

\setlength{\columnsep}{4mm}                                                              
\usepackage{multicol}                                                                    
%

\usepackage[small,bf]{caption}                                                           

\usepackage[T1]{fontenc} 
\usepackage{lmodern}     

\hyphenpenalty=1000                                                                      
\pretolerance=8000                                                                       
\tolerance=9500                                                                          
\hbadness=8000                                                                           
\vbadness=9000                                                                           
\displaywidowpenalty=0                                                                   
\clubpenalty=10000                                                                       
\widowpenalty=10000                                                                      
\lefthyphenmin=3
\righthyphenmin=3
\brokenpenalty=10000
\hbadness=10000                                                                          
\vbadness=10000                                                                          





\usepackage{scalerel}                                                                    

\usepackage{paralist}            
\setdefaultitem{-}{$\triangleright$}{}{}                                                 
\usepackage[ampersand]{easylist} 

\usepackage[usenames,dvipsnames]{xcolor}                                                 

\usepackage{amsfonts,amssymb}    
\usepackage[intlimits]{amsmath}  
\usepackage[]{mathalfa}
\usepackage{mathrsfs}            
\usepackage{bm}                  
\usepackage{upgreek}             
\usepackage{dsfont}              
\usepackage{mathtools}           
\usepackage{relsize}             
\usepackage{cancel}              
\usepackage{accents}             


\makeatletter
\newcommand{\raisemath}[1]{\mathpalette{\raisem@th{#1}}}
\newcommand{\raisem@th}[3]{\raisebox{#1}{$#2#3$}}
\makeatother

\usepackage{float}      
\usepackage{subfig}     
\usepackage{booktabs}   


\usepackage{amsthm}                                                                       
\newtheorem{defi}{Definition}[]                                                           
\newtheorem{theo}{Theorem}[]                                                              
\newtheorem{lemm}{Lemma}[]                                                                
\newtheorem{conj}{Conjecture}[]                                                           

\usepackage{algorithm}
\usepackage{algorithmic}

\definecolor{mycc}{rgb}{0, 0, 0.45}                                                       
\definecolor{mygreen}{rgb}{0, 0.4, 0}                                                     
\usepackage{ifpdf}                                                                        
\ifpdf                                                                                    
	\usepackage[pdftex=true,                                                                
		pdftitle={Stoch_iEnS},                                                                
		pdfauthor={Patrick N. Raanes},                                                        
		hyperindex=true,                                                                      
		colorlinks=true,                                                                      
		linkcolor=mygreen,                                                                    
		citecolor=mycc]{hyperref}                                                             
	\usepackage{epstopdf}                                                                   
%
	\usepackage[stretch=10]{microtype}                                                       
\else                                                                                     
	\usepackage[hypertex,                                                                    
		hyperindex=true,                                                                       
		colorlinks=false]{hyperref}                                                            
	\usepackage{graphicx}                                                                    
\fi                                                                                       

\graphicspath{{./}{Pics/}}


\urlstyle{tt} 
\hypersetup{                                                                             
}                                                                                        


\usepackage[colon,authoryear,square]{natbib}                                             

\usepackage{cleveref}
\crefname{equation}{equation}{equations}
\Crefname{equation}{Equation}{Equations}
\crefname{subeqns}{equation}{equations}
\Crefname{subeqns}{Equation}{Equations}
\crefformat{subeqns}{equation~(#2#1#3)}
\Crefformat{subeqns}{Equation~(#2#1#3)}
\labelcrefformat{subeqns}{(#2#1#3)}
\crefname{figure}{Figure}{Figures}
\Crefname{figure}{Figure}{Figures}
\crefformat{subfigure}{Figure~(#2#1#3)}
\Crefformat{subfigure}{Figure~(#2#1#3)}
\crefname{subfigure}{Figure}{Figures}
\Crefname{subfigure}{Figure}{Figures}
\crefname{table}{Table}{Tables}
\Crefname{table}{Table}{Tables}
\crefname{theo}{Theorem}{Theorems}
\Crefname{theo}{Theorem}{Theorems}
\crefname{theo}{Theorem}{Theorems}
\Crefname{theo}{Theorem}{Theorems}
\crefname{lemm}{Lemma}{Lemmas}
\Crefname{lemm}{Lemma}{Lemmas}
\crefname{conj}{Conjecture}{Conjectures}
\Crefname{conj}{Conjecture}{Conjectures}
\crefname{defi}{Definition}{Definitions}
\Crefname{defi}{Definition}{Definitions}
\crefname{prop}{Proposition}{Propositions}
\Crefname{prop}{Proposition}{Propositions}
\crefname{property}{Property}{Properties}
\Crefname{property}{Property}{Properties}
\crefname{coro}{Corollary}{Corollaries}
\Crefname{coro}{Corollary}{Corollaries}
\crefname{algocf}{Algorithm}{Algorithms}
\Crefname{algocf}{Algorithm}{Algorithms}




\usepackage{authblk}                                                                     

\newcommand{\NERSCa}{Thorm{\o}hlens gate 47, 5006 Bergen, Norway}
\newcommand{\NORCE}{NORCE}
\newcommand{\NORCEa}{Pb. 22 Nyg\aa rdstangen, 5838 Bergen, Norway}
\newcommand{\myemail}{patrick.n.raanes@gmail.com}

\title{Revising the stochastic iterative ensemble smoother}

\author[1,2]{Patrick Nima Raanes\thanks{                                                 
   \myemail                                                                              
   }}                                                                                    
	 \author[1,2]{Geir Evensen}                                                             
\author[1]{Andreas St\o rksen Stordal}                                                   
\affil[1]{\NORCE, \NORCEa         }                                                      
\affil[2]{NERSC,  \NERSCa         }                                                      




%



\newcommand{\tn}[1]{{\textnormal{{#1}}}}


\newcommand{\Reals}{\mathbb{R}}

\newcommand{\tq}[0]{\; : \;}

\newcommand{\Expect}[0]{\mathop{}\! \mathbb{E}}

\newcommand{\NormDist}{\mathop{}\! \mathcal{N}}

\DeclareMathAlphabet{\mathpzc}{OT1}{pzc}{m}{it}






\DeclareMathOperator{\col}{col}
\DeclareMathOperator{\rank}{rank}



\newcommand{\trsign}{{\mathsf{T}}}
\newcommand{\tr}{\ensuremath{^{\trsign}}}

\newcommand{\pinvsign}{{+}}

\newcommand{\pinv}{\ensuremath{^\pinvsign}}





\newcommand{\ones}[0]{\mathds{1}}


\newcommand*\diff{\mathop{}\!\mathrm{d}}
\newcommand*{\pdf}{\mathop{}\! p}

\newcommand{\dd}[2]{\frac{\diff #1}{\diff #2}}


\newcommand{\PDF}[0]{pdf}

\newcommand{\mat}[1]{{\mathbf{{#1}}}}
\newcommand{\bvec}[1]{{\bm{#1}}}

\newcommand{\x}[0]{\bvec{x}}
\newcommand{\w}[0]{\bvec{w}}
\newcommand{\e}[0]{\bvec{e}}
\newcommand{\y}[0]{\bvec{y}}
\newcommand{\z}[0]{\bvec{z}}
\newcommand{\bu}[0]{\bvec{u}}

\newcommand{\X}[0]{\mat{X}}
\newcommand{\A}[0]{\mat{A}}
\newcommand{\Y}[0]{\mat{Y}}
\newcommand{\E}[0]{\mat{E}}

\newcommand{\Ups}[0]{\mat{\Upsilon}}
\newcommand{\ups}[0]{\bvec{\upsilon}}

\newcommand{\Omeg}[0]{\mat{\Omega}}
\newcommand{\W}[0]{\mat{W}}
\newcommand{\T}[0]{\mat{T}}

\newcommand{\K}[0]{\mat{K}}
\newcommand{\I}[0]{\mat{I}}

\newcommand{\bS}[0]{\mat{S}}

\newcommand{\dtObs}{\Delta t_{\text{obs}} \,}
\newcommand{\dtDAW}{\Delta t_{\text{DAW}} \,}

\newcommand{\Z}[0]{\mat{Z}}
\newcommand{\D}[0]{\mat{D}}

\newcommand{\compactN}[0]{{N{-}1}}
\newcommand{\cN}[0]{(\compactN)}
\newcommand{\fracN}[0]{{{\tfrac{1}{N-1}}}} 

\newcommand{\Pro}[0]{\mat{\Pi}}
\newcommand{\PiOne}[0]{{\Pro_\ones}}
\newcommand{\PiAN}[0]{\Pro_\ones^\perp}

\newcommand{\RMSE}[0]{{\text{RMSE}}}


\newcommand{\tnf}[0]{\tn{prior}}
\newcommand{\tny}[0]{\tn{lklhd}}

\newcommand\matbar[1]{\accentset{\rule{.5em}{.09em}}{#1}}
\newcommand\vecbar[1]{\accentset{\rule{.4em}{.08em}}{#1}}

\newcommand{\ObsSym}      [0]{M}
\newcommand{\ObsMod}      [0]{\mathcal    {\ObsSym}}
\newcommand{\ObsMat}      [0]{\mat        {\ObsSym}}
\newcommand{\barObsMat}   [0]{\mat{\matbar{\ObsSym}}}
\newcommand{\barObsMatk}  [0]{\mat{\matbar{\ObsSym}}_i}
\newcommand{\barObsMatktr}[0]{\mat{\matbar{\ObsSym}}_i^{\raisemath{-0.5ex}{\trsign}}}

\newcommand{\Res}[0]{\mat{D}}

\newcommand{\bx}[0]{\bvec{{\vecbar{x}}}}
\newcommand{\barK}[0]{\mat{\matbar{K}}}

\newcommand {\mux} [0] {\bvec{\mu}_\x}

\newcommand{\Dta}[0]{\mat{\Delta}}
\newcommand{\barDta}[0]{\matbar{\Dta}}

\newcommand{\sW}{\scaleobj{0.8}{\W}}

\newcommand{\itn}{\times}

\newcommand { \xn      } [1] {          \x_{#1}        } 
\newcommand { \yn      } [1] {          \y_{#1}        }
\newcommand { \res     } [0] {    \bvec{\delta}        } 
\newcommand { \resn    } [1] {        \res_{#1}        }
\newcommand { \xnzero  } [0] {          \x_{n,0}       }
\newcommand { \xnk     } [0] {          \x_{n,i}       }
\newcommand { \xnl     } [0] {          \x_{n,i+1}     }
\newcommand { \xns     } [0] {          \x_{\itn}      }
\newcommand { \witn    } [0] {          \w_{\itn}      }
\newcommand { \Knk     } [0] {          \K_{\itn}      }
\newcommand { \Hnk     } [0] {     \ObsMat_{\itn}      }
\newcommand { \Dynk    } [0] {        \Dta_{\itn}^\tny }
\newcommand { \Dfnk    } [0] {        \Dta_{\itn}^\tnf }
\newcommand { \barDl   } [0] {     \barDta      ^\tny  }
\newcommand { \barDp   } [0] {     \barDta      ^\tnf  }
\newcommand { \barDlnk } [0] {     \barDta_{\itn}^\tny }
\newcommand { \barDpnk } [0] {     \barDta_{\itn}^\tnf }

\newcommand { \Jn      } [0] {           J_{\x,n}      }
\newcommand { \DJnk    } [0] { \nabla \! J_{\itn}      }


\newcommand { \Jwn     } [0] {           J_{\w,n}                         }
\newcommand { \nJf     } [0] { \nabla \! J_{\sW}^\tnf                     }
\newcommand { \nJy     } [0] { \nabla \! J_{\sW}^\tny                     }

\newcommand { \Cx      } [0] {      \mat{C }_\x                           }
\newcommand { \Chx     } [0] {      \mat{C }_{\ObsMod(\x),\x}             }
\newcommand { \Cy      } [0] {      \mat{C }_{\y}                         }
\newcommand { \Cr      } [0] {      \mat{C }_\res                         }
\newcommand { \Cw      } [0] {      \mat{\matbar{C} }_{\w}                }
\newcommand { \Cwk     } [0] {      \mat{\matbar{C} }_{\w,i}              }
\newcommand { \Cs      } [0] {      \mat{C }_{\itn}                       }
\newcommand { \barCx   } [0] {      \mat{\matbar{C} }_\x                  }
\newcommand { \barCxk  } [0] {      \mat{\matbar{C} }_{\x,i}              }

\newcommand { \barChx  } [0] {      \mat{\matbar{C} }_{\ObsMod(\x),\x}    }
%

\begin{document}

\newcommand{\makeAbstractCmd}[1]{
	\newcommand*{\myabstract}{
		\begin{abstract}
			#1
		\end{abstract}
	}
}
\makeAbstractCmd{
	Ensemble randomized maximum likelihood (EnRML) is an iterative (stochastic) ensemble smoother,
	used for large and nonlinear inverse problems,
	such as history matching and data assimilation.
	Its current formulation is overly complicated and
	has issues with computational costs, noise, and covariance localization,
	even causing some practitioners to omit crucial prior information.
	This paper
	resolves these difficulties
	and streamlines the algorithm,
	without changing its output.
	These simplifications are achieved through the careful treatment of
	the linearizations and subspaces.
	For example, it is shown
	(a) how ensemble linearizations relate to average sensitivity,
	and (b) that the ensemble does not lose rank during updates.
	The paper also draws significantly on the theory of
	the (deterministic) iterative ensemble Kalman smoother (IEnKS).
	Comparative benchmarks are obtained with the Lorenz-96 model
	with these two smoothers and the ensemble smoother using multiple data assimilation (ES-MDA).
}

\newcommand*{\draftV}{
\begin{center}
	\vspace{-2em}
	Draft version 8
\end{center}
}


\twocolumn[                                                                              
  \begin{@twocolumnfalse}                                                                
    \maketitle                                                                           
    \vspace{-1.5em}                                                                      
    \myabstract~\\                                                                       
  \end{@twocolumnfalse}                                                                  
]                                                                                        



\section{Introduction}                                                                    
\label{sec:Intro}
Ensemble (Kalman) smoothers are approximate methods used for
data assimilation (state estimation in geoscience),
history matching (parameter estimation for petroleum reservoirs),
and other inverse problems constrained by partial differential equations.
Iterative forms of these smoothers,
derived from optimization perspectives,
have proven useful in improving the estimation accuracy
when the forward operator is nonlinear.
Ensemble randomized maximum likelihood (EnRML)
is one such method.

This paper rectifies several conceptual and computational complications with EnRML,
detailed in \cref{sec:intro:EnRML}.
As emphasized in \cref{sec:intro:IEnKS},
these improvements are largely inspired by the theory of
the iterative ensemble Kalman smoother (IEnKS).
\emph{Readers unfamiliar with EnRML may jump to
the beginning of the derivation},
starting in \cref{sec:RML},
which defines the inverse problem
and the idea of the randomized maximum likelihood method.
\Cref{sec:EnRML} derives the new formulation of EnRML,
which is summarized by \cref{algo:GN_EnRML} of \cref{sec:algo}.
\Cref{sec:experiments} shows benchmark experiments
obtained with various iterative ensemble smoothers.
\Cref{sec:EnKF_proofs} provides proofs of some of the mathematical results used in the text.


\subsection{Ensemble randomized maximum likelihood (EnRML): obstacles}
\label{sec:intro:EnRML}
The Gauss-Newton variant of EnRML
was given by \citet{gu2007iterative,chen2012ensemble},
with an important precursor from \citet{reynolds2006iterative}.
This version explicitly requires the
ensemble-estimated ``model sensitivity'' matrix, herein denoted $\barObsMatk$.
As detailed in \cref{sec:EnRML},
this is problematic because $\barObsMatk$ is noisy
and requires the computation of the pseudo-inverse of the ``anomalies'', $\X_i\pinv$,
for each iteration, $i$.

A Levenberg-Marquardt variant was proposed in the landmark paper of
\citet[][]{chen2013levenberg}.
%
Its main originality is a partial resolution to the above issue by modifying the Hessian
(beyond the standard trust-region step regularization):
the prior ensemble covariance matrix is replaced by
the posterior covariance (of iteration $i$): $\barCx \leftarrow \barCxk$.
Now the Kalman gain form of the \emph{likelihood increment}
is ``vastly simplified'', because
the linearization $\barObsMatk$ only appears in the product
$\barObsMatk \barCxk \barObsMatktr$,
which does not require $\X_i\pinv$.
%
For the \emph{prior increment},
on the other hand,
the modification breaks its Kalman gain form.
Meanwhile, the precision matrix form, i.e. their equation (10),
is already invalid because it requires the inverse of $\barCxk$.
Still, in their equation (15),
the prior increment is formulated with an inversion in ensemble space,
and also unburdened 
of the explicit computation of $\barObsMatk$.
Intermediate explanations are lacking,
but could be construed to involve approximate inversions.
Another issue is that
the pseudo-inverse of $\barCx$ is now required (via $\X$),
and covariance localization is further complicated.

An approximate version was therefore also proposed
in which the prior mismatch term is omitted from the update formula altogether.
%
This is not principled,
and severely aggravates the chance of overfitting
and poor prediction skill.
Therefore, 
unless the prior mismatch term is relatively insignificant,
overfitting must be prevented by limiting the number of steps
or by clever stopping criteria.
Nevertheless, this version has received significant attention
in history matching.

This paper revises EnRML;
without any of the above tricks,
we formulate the algorithm such that
there is no explicit computation of $\barObsMatk$,
and show how the product $\barObsMatk \X$ may be computed
without any pseudo-inversions of the matrix of anomalies.
Consequently, the algorithm is simplified,
computationally and conceptually,
and there is no longer any reason to omit the prior increment.
Moreover, the Levenberg-Marquardt variant
is a trivial modification of the Gauss-Newton variant.
The above is achieved by improvements to the derivation,
notably by
\begin{inparaenum}[(a)]
\item improving the understanding of the sensitivity (i.e. linearizations) involved,
\item explicitly and rigorously treating issues of rank deficiency and subspaces, and
\item avoiding premature insertion of singular value decompositions (SVD).
\end{inparaenum}

\subsection{Iterative ensemble Kalman smoother (IEnKS)}
\label{sec:intro:IEnKS}

The contributions of this paper (listed by the previous paragraph) are original,
but draw heavily on the theory of
the IEnKS of
\citet{sakov2012iterative,bocquet2012combining,bocquet2014iterative}.
Relevant precursors include \citep{zupanski2005maximum},
as well as the iterative, extended Kalman filter \citep[e.g.][]{jazwinski1970stochastic}.

It is informally known that EnRML can be seen as a stochastic flavour of the IEnKS
\citep{sakov2012iterative}.
Indeed, while the IEnKS update
takes the form of a deterministic, ``square-root'' transformation,
based on a single objective function,
EnRML uses stochastic ``perturbed observations'',
associated with an ensemble of randomized objective functions.

Another notable difference is that
the IEnKS was developed in the atmospheric literature,
while EnRML was developed in the literature on subsurface flow.
Thus, typically, 
the IEnKS is applied to (sequential) state estimation problems
such as filtering for chaotic dynamical systems,
while EnRML is applied to (batch) parameter estimation problems,
such as nonlinear inversion for physical constants and boundary conditions.
For these problems,
EnRML is sometimes referred to as the iterative ensemble smoother (IES).
As shown by \citet{gu2007iterative}, however,
EnRML is easily reformulated for the sequential problem.
Vice-versa, the IEnKS may be formulated for the batch problem.

The improvements to the EnRML algorithm herein
render it very similar to the IEnKS, also in computational cost.
It thus fully establishes that EnRML is
the stochastic ''counterpart'' to the IEnKS.
In spite of the similarities,
the theoretical insights and comparative experiments of this paper
should make it interesting also for readers already familiar with the IEnKS.

\section{RML}
\label{sec:RML}
Randomized maximum likelihood (RML)
\citep{kitanidis1995quasi,oliver1996conditional,oliver2008inverse}
is an approximate solution approach to a class of inverse problems.
The form of RML described here is a simplification,
common for large inverse problems,
without the use of a correction step (such as Metropolis-Hastings).
This restricts the class of problems for which it is unbiased,
but makes it more tractable \citep{oliver2017metropolized}.
Similar methods were proposed and studied by
\citet{bardsley2014randomize,liu2017uncertainty,morzfeld2018variational}.

\subsection{The inverse problem}
\label{sec:RML:IP}
Consider the problem of
estimating the unknown, high-dimensional state (or parameter) vector $\x \in \Reals^M$,
given the observation $\y \in \Reals^P$.
It is assumed that
the (generic and typically nonlinear) forward observation process
may be approximated by a computational model, $\ObsMod$, so that
\begin{align}
	\y     &= \ObsMod(\x) + \res \, ,
	\label{eqn:HMMy}
\end{align}
where the error, $\res$, is random and
gives rise to a likelihood, $\pdf(\y|\x)$.

In the Bayesian paradigm,
prior information is quantified as a probability density function (\PDF{})
called the prior, denoted $\pdf(\x)$,
and the truth, $\x$, is considered a draw thereof.
The inverse problem then consists of computing and representing
the posterior which, in principle, is given by pointwise multiplication:
\begin{align}
	\pdf(\x|\y) \propto \pdf(\y|\x) \pdf(\x)
	\label{eqn:Bayes}
	\, ,
\end{align}
quantifying the updated estimation of $\x$.
Due to the noted high dimensionality and nonlinearity, this can be challenging,
necessitating approximate solutions.

The prior is assumed Gaussian,
with mean $\mux$ and covariance $\Cx$,
i.e.
\begin{align}
	\pdf(\x)
	&= \NormDist(\x \,|\, \mux,\Cx)
	\notag
	\\
	&= |2 \pi \Cx|^{-\frac{1}{2}}
	\, e^{- \frac{1}{2} \|\x - \mux\|^2_{\Cx}}
	\label{eqn:prior}
	\, .
\end{align}
\emph{For now}, the prior covariance matrix, $\Cx \in \Reals^{M \times M}$,
is assumed invertible
such that the corresponding norm, $\|\x\|_{\Cx}^2 = \x\tr\Cx^{-1}\x$, is defined.
Note that vectors are taken to have column orientation,
and that $\x\tr$ denotes the transpose.

The observation error, $\res$, is assumed drawn from:
\begin{align}
	\pdf(\res) &= \NormDist(\res \,|\, \bvec{0}, \Cr)
	\label{eqn:res}
	\, ,
\end{align}
whose covariance, $\Cr \in \Reals^{P \times P}$, will always be assumed invertible.
Then,
assuming $\res$ and $\x$ are independent
and recalling \cref{eqn:HMMy},
\begin{align}
	\pdf(\y|\x) &= \NormDist(\y \,|\, \ObsMod(\x), \Cr) \, .
	\label{eqn:lklhd}
\end{align}

\subsection{Randomize, then optimize}
\label{sec:RML:algo}

The Monte-Carlo approach offers a
convenient representation of distributions as samples.
Here, the prior is represented by the ``prior ensemble'', $\{\xn{n}\}_{n=1}^N$,
whose members (sample points) are assumed independently drawn from it.
RML
is an efficient method
to approximately ``condition''
(i.e. implement \labelcref{eqn:Bayes} on)
the prior ensemble,
using optimization.
Firstly, an ensemble of perturbed observations, $\{\yn{n}\}_{n=1}^N$,
is generated as $\yn{n} = \y + \resn{n}$,
where $\resn{n}$ is independently drawn according to \cref{eqn:res}.

Then, the $n$-th ``randomized log-posterior'',
$\Jn$,
is defined by Bayes' rule \labelcref{eqn:Bayes},
except with the prior mean and the observation
replaced by the $n$-th members
of the prior and observation ensembles:
\begin{align}
	\label{eqn:Jn}
	\Jn(\x)
	&=
	{\textstyle \frac{1}{2}}
	\|\x - \xn{n}\|^2_{\Cx}
	+
	{\textstyle \frac{1}{2}}
	\|\ObsMod(\x) - \yn{n}\|^2_{\Cr}
	\, .
\end{align}
The two terms are referred to as
the model mismatch (log-prior) and data mismatch (log-likelihood),
respectively.

Finally, these log-posteriors are minimized.
Using the Gauss-Newton iterative scheme (for example)
requires
\labelcref{eqn:jac} its gradient and
\labelcref{eqn:hess} its Hessian approximated by first-order model expansions,
both evaluated at the current iterate,
labelled $\xnk$ for each member $n$ and iteration $i$.
To simplify the notation,
define $\xns = \xnk$.
Objects evaluated at $\xns$ are
similarly denoted;
for instance, $\Hnk = \ObsMod'(\xns) \in \Reals^{P \times M}$ denotes
the Jacobian of $\ObsMod$ evaluated at $\xns$,
and
\begin{subequations}
	\begin{align}
		\label[equation]{eqn:jac}
		\DJnk &= \Cx^{-1}[ \xns - \xn{n}] + \Hnk\tr \Cr^{-1} [\ObsMod( \xns ) - \yn{n}]
		\, ,
		\\
		\label[equation]{eqn:hess}
		\Cs^{-1} &= \Cx^{-1} + \Hnk\tr \Cr^{-1} \Hnk
		\, .
	\end{align}
	\label[subeqns]{eqn:Jppp}%
\end{subequations}
Application of the Gauss-Newton scheme yields:
\begin{equation}
	\begin{split}
		\xnl
		&=
		\xns - \Cs \, \DJnk
		\\
		&=
		\xns + \Dfnk + \Dynk
		\label{eqn:E_DD}
		\, ,
	\end{split}
\end{equation}
where
the prior (or model) and likelihood (or data) increments are
respectively given by:
\begin{subequations}
	\begin{align}
		\label[equation]{eqn:b_pri}
		\Dfnk &= \Cs \Cx^{-1}[\xn{n} - \xns]
		\, ,
		\\
		\label[equation]{eqn:b_lkle}
		\Dynk &= \Cs \Hnk\tr \Cr^{-1} [\yn{n} - \ObsMod( \xns )]
		\, ,
	\end{align}
	\label[subeqns]{eqn:D_RML}%
\end{subequations}
which can be called the ``precision matrix'' form.

Alternatively, by corollaries of the well known Woodbury matrix identity,
the increments can be written in the ``Kalman gain'' form:
\begin{subequations}
	\begin{align}
		\label[equation]{eqn:c_pri}
		\Dfnk &= (\I_M - \Knk \Hnk)[\xn{n} - \xns]
		\, ,
		\\
		\label[equation]{eqn:c_lkl}
		\Dynk &= \Knk [\yn{n} - \ObsMod( \xns )]
		\, ,
	\end{align}
	\label[subeqns]{eqn:D_RML_2}%
\end{subequations}
where $\I_M \in \Reals^{M \times M}$ is the identity matrix,
and 
$\Knk \in \Reals^{M \times P}$ is the gain matrix:
\begin{align}
	\Knk &= \Cx \Hnk\tr  \Cy^{-1}
	\, ,
\end{align}
with
\begin{align}
	\Cy &= \Hnk \Cx \Hnk\tr + \Cr
	\label{eqn:Cy}
	\, .
\end{align}
As the subscript suggests, $\Cy$ may be identified (in the linear case)
as the prior covariance of the observation, $\y$, of \cref{eqn:HMMy};
it is also the covariance of the innovation, $\y - \ObsMod (\mux)$.
Note that if 
$P \ll M$,
then the inversion of $\Cy \in \Reals^{P \times P}$ for the Kalman gain form \labelcref{eqn:D_RML_2}
is significantly cheaper than the inversion of $\Cs \in \Reals^{M \times M}$
for the precision matrix form \labelcref{eqn:D_RML}.


\section{EnRML}
\label{sec:EnRML}

Ensemble-RML (EnRML) is an approximation of RML in which
the ensemble is used in its own update,
by estimating $\Cx$ and $\Hnk$.
This section derives EnRML,
and gradually introduces the new improvements.

Computationally, compared to RML,
EnRML offers the simultaneous benefits of
working with low-rank representations of covariances,
and not requiring a tangent-linear (or adjoint) model.
Both advantages will be further exploited in the new formulation of EnRML.

Concerning their sampling properties,
a few points can be made.
Firstly (due to the ensemble covariance),
EnRML is biased for finite $N$,
even for a linear-Gaussian problem,
for which RML will sample the posterior correctly.
This bias arises for the same reasons as in
the ensemble Kalman filter \citep[EnKF,][]{van1999comment,sacher2008sampling}.
Secondly (due to the ensemble linearization),
EnRML effectively smoothes the likelihood.
It is therefore less prone to getting trapped
in local maxima of the posterior \citep{chen2012ensemble}.
\citet{sakov2018iterative} explain this by drawing an analogy to
the secant method, as compared to the Newton method.
Hence, it may reasonably be expected that 
EnRML yields constructive results
if the probability mass of the exact posterior
is concentrated around its global maximum.
Although this regularity condition is rather vague,
it would require that the model be ``not too nonlinear'' in this neighbourhood.
Conversely, EnRML is wholly inept at reflecting multimodality
introduced through the likelihood,
and so RML may be better suited when local modes feature prominently,
as is quite common in problems of subsurface flow \citep{oliver2011recent}.
However, while RML has the ability to sample multiple modes,
it is difficult to predict to what extent their relative proportions will be accurate
(without the costly use of a correction step such as Metropolis-Hastings).
Further comparison of the sampling properties of RML and EnRML
was done by \citet{evensen2018analysis}.

\subsection{Ensemble preliminaries}
\label{sec:en_prelim}

\newcommand{\xnw  }[1]{\makebox[1.5em][c]{$\xn  {#1}$}}
\newcommand{\resnw}[1]{\makebox[1.5em][c]{$\resn{#1}$}}
For convenience, define the concatenations:
\begin{align}
	\label{eqn:concat_E}
	\E &= \begin{bmatrix}
		\xnw{1}, & \ldots & \xnw{n}, & \ldots & \xnw{N}
	\end{bmatrix} \in \Reals^{M \times N} \, ,
	\\
	\Res &=
	\begin{bmatrix}
		\resnw{1}, & \ldots & \resnw{n}, & \ldots & \resnw{N}
	\end{bmatrix}
	\in \Reals^{P \times N}
	\label{eqn:concat_D}
	\, ,
\end{align}
which are known as the ``ensemble matrix'' and the ``perturbation matrix'',
respectively.
%

Projections sometimes appear
through the use of linear regression.
We therefore recall \citep[][]{trefethen1997numerical} that
a (square) matrix $\Pro$ is an orthogonal projector if
\begin{align}
	\Pro \Pro = \Pro = \Pro\tr
	\label{eqn:Pro_def}
	\, .
\end{align}
For any matrix $\A$, let $\Pro_{\A}$ denote
the projector whose image is the column space of $\A$,
implying that
\begin{align}
	\Pro_{\A} \A &= \A
	\, .
\end{align}
Equivalently, $\Pro_{\A}^\perp \A = \mat{0}$,
where $\Pro^\perp_{\A} = \I - \Pro_{\A}$ is called the complementary projector.
The (Moore-Penrose) pseudo-inverse, $\A\pinv$,
may be used to express the projector:
\begin{align}
	\Pro_{\A} = \A \A\pinv = (\A\tr)\pinv (\A\tr)
	\label{eqn:ProjPinv}
	\, .
\end{align}
Here, the second equality follows from the first
by \cref{eqn:Pro_def} and $(\A\pinv)\tr = (\A\tr)\pinv$.
The formulae simplify further in terms of the SVD of $\A$.

Now, denote $\ones \in \Reals^{N}$ the (column) vector of ones.
%
The matrix of anomalies, $\X \in \Reals^{M \times N}$, is defined and computed by
subtracting the ensemble mean,
$\bx = \E \ones/N$, from each column of $\E$.
It should be appreciated that this amounts to the projection:
\begin{align}
	\X &= \E - \bx \ones\tr = \E \PiAN \, ,
	\label{eqn:A1}
\end{align}
where
$\PiAN = \I_N - \PiOne$,
with $\PiOne = \ones\ones\tr / N$.

\begin{defi}[The ensemble subspace]
	The flat (i.e. affine subspace) given by:
	$\{\x \in \Reals^M \tq [\x - \bx] \in \col(\X)\}$.
\end{defi}
Similarly to \cref{sec:RML},
iteration index ($i>0$) subscripting on $\E$, $\X$,
and other objects,
is used to indicate that they are conditional (i.e. posterior).
The iterations are initialized with the prior ensemble: $\xnzero = \xn{n}$.

\subsection{The constituent estimates}
\label{sec:underlying}
The ensemble estimates of $\Cx$ and $\Hnk$ are the building blocks of the EnRML algorithm.
The canonical estimators are used,
namely the sample covariance \labelcref{eqn:e},
and the least-squares linear regression coefficients \labelcref{eqn:f}.
They are denoted with the overhead bar:
\begin{subequations}
	\begin{align}
		\label[equation]{eqn:e}
		\barCx   &= \fracN \X \X\tr \, , \\
		\label[equation]{eqn:f}
		\barObsMatk
		&= \ObsMod(\E_i) \X_i\pinv
		\, .
	\end{align}
	\label[subeqns]{eqn:B_H_bar}%
\end{subequations}
The anomalies at iteration $i$ are again given by $\X_i = \E_i \PiAN$,
usually computed by subtraction of $\bx_i$.
The matrix $\ObsMod(\E_i)$ is defined by
the column-wise application of $\ObsMod$ to the ensemble members.
Conventionally, $\ObsMod(\E_i)$ would also be centred in \cref{eqn:f},
i.e. multiplied on the right by $\PiAN$.
However, this operation (and notational burden) can be neglected,
because $\PiAN \X_i\pinv = \X_i\pinv$,
which follows from $\Pro (\A \Pro)\pinv = (\A \Pro)\pinv$
\citep[valid for any matrix $\A$ and projector $\Pro$, as shown by][]{maciejewski1985obstacle}.

Note that the linearization (previously $\Hnk$, now $\barObsMatk$) no longer depends on the ensemble index, $n$.
Indeed, it has been called ``average sensitivity''
since the work of \citet{zafari2005assessing,reynolds2006iterative,gu2007iterative}.
However, this intuition has not been rigorously justified.\footnote{%
	The formula \labelcref{eqn:f} for $\barObsMatk$
	is sometimes arrived at via a truncated Taylor expansion 
	of $\ObsMod$ around $\bx_i$.
	This is already an approximation,
	and still requires further, indeterminate approximations
	to obtain any other interpretation than
	$\ObsMod' (\bx_i)$: the Jacobian evaluated at the ensemble mean.
}
This is accomplished by the following theorem.
\begin{theo}[Regression coefficients versus derivatives]
	\label{theo:sens}
	Suppose the ensemble is drawn from a Gaussian.
	Then
	\begin{align}
		\lim_{N \rightarrow \infty} \barObsMat
		&= \Expect [\ObsMod'(\x)]
		\, ,
	\end{align}
	with ``almost sure'' convergence,
	and expectation $(\Expect)$ in $\x$,
	which has the same distribution as the ensemble members.
	Regularity conditions and proof in \cref{sec:EnKF_proofs}.
\end{theo}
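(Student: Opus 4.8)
The plan is to recast the regression coefficients as a ratio of sample (cross-)covariances, pass to the population limit by the strong law of large numbers, and then identify the resulting population regression matrix with the mean Jacobian through Gaussian integration by parts.

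First I would put $\barObsMat$ into covariance form. Using $\X\pinv = \X\tr(\X\X\tr)\pinv$ (read off from the SVD of $\X$) together with $\X\tr = \PiAN\E\tr$ and the idempotency and symmetry of $\PiAN$, the $(N{-}1)$ factors cancel and one obtains
\begin{align}
	\barObsMat = \ObsMod(\E)\,\X\pinv = \barChx\,\barCx\pinv,
	\notag
\end{align}
where $\barChx = \fracN\,\ObsMod(\E)\,\PiAN\E\tr$ is the sample cross-covariance of $\ObsMod(\x)$ with $\x$, and $\barCx$ is the sample covariance of \cref{eqn:e}.

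Next I would take $N\to\infty$. Under moment conditions making the relevant entries integrable, the strong law of large numbers gives $\barCx\to\Cx$ and $\barChx\to\Chx$ almost surely, entrywise, where $\Chx = \Cov[\ObsMod(\x),\x]$. Since $\Cx$ is invertible, for all $N$ large enough $\barCx$ has full rank (almost surely, once $N>M$, as the Gaussian anomalies then span $\Reals^M$), so its pseudo-inverse equals its ordinary inverse; on this full-rank neighbourhood of $\Cx$ inversion is continuous, and the continuous mapping theorem yields $\barObsMat \to \Chx\Cx^{-1}$ almost surely.

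It then remains to show $\Chx\Cx^{-1} = \Expect[\ObsMod'(\x)]$, which is exactly where the Gaussian hypothesis enters. By Stein's identity (multivariate Gaussian integration by parts), $\Cov[\ObsMod(\x),\x] = \Expect[\ObsMod'(\x)]\,\Cx$ for $\x\sim\NormDist(\mux,\Cx)$, so right-multiplying by $\Cx^{-1}$ gives the claim. I expect the main obstacle to be this last step: pinning down the correct regularity conditions (differentiability of $\ObsMod$, plus growth and integrability hypotheses ensuring that $\Expect[\ObsMod'(\x)]$ exists and that the boundary terms in the integration by parts vanish) and verifying them is the delicate part. A secondary subtlety is that the pseudo-inverse is discontinuous across rank changes; this is harmless here only because the limit $\Cx$ has full rank, so one must confirm that the sample eventually lies in the full-rank regime before invoking continuity.
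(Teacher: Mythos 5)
Your argument is correct and follows essentially the same route as the paper's own proof: rewrite $\barObsMat$ as $\barChx\,\barCx^{-1}$ (valid once $\barCx$ attains full rank), invoke strong consistency of the sample (cross-)covariances together with the continuous mapping / Slutsky step to get $\barObsMat \rightarrow \Chx\,\Cx^{-1}$ almost surely, and conclude via Stein's lemma that $\Chx\,\Cx^{-1} = \Expect[\ObsMod'(\x)]$. Your version is somewhat more explicit about the pseudo-inverse-versus-inverse subtlety and about the regularity conditions needed for the Gaussian integration by parts, which the paper only states as assumptions, but the underlying approach is identical.
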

A corollary of \cref{theo:sens} is that $\barObsMat \approx \frac{1}{N} \sum_{n=1}^N \ObsMod'(\x_n)$,
justifying the ``average sensitivity/derivative/gradient'' description.
The theorem applies for the ensemble of \emph{any} Gaussian,
and hence also holds for $\barObsMatk$.
On the other hand, the generality of \cref{theo:sens} is restricted
by the Gaussianity assumption.
Thus, for generality and precision,
$\barObsMatk$ should simply be labelled
``the least-squares (linear) fit'' of $\ObsMod$, based on $\E_i$.
%

Note that the computation \labelcref{eqn:f} of $\barObsMatk$
	seemingly requires calculating a new pseudo-inverse, $\X_i\pinv$, at each iteration, $i$;
	this is addressed in \cref{sec:Y}.

The prior covariance estimate (previously $\Cx$, now $\barCx$) is \emph{not} assumed invertible,
in contrast to \cref{sec:RML}.
It is then not possible to employ the precision matrix forms \labelcref{eqn:D_RML}
because $\barCx^{-1}$ is not defined.
Using the $\barCx\pinv$ in its stead is flawed and damaging because it
is zero in the directions orthogonal to the ensemble subspace,
so that its use would imply that the prior is assumed
infinitely uncertain (i.e. flat) 
as opposed to infinitely certain (like a delta function) in those directions.
Instead,
one should employ ensemble subspace formulae,
or equivalently (as shown in the following, using corollaries of the Woodbury identity),
the Kalman gain form.

\subsection{Estimating the Kalman gain}
The ensemble estimates \labelcref{eqn:B_H_bar}
are now substituted into 
the Kalman gain form of the update, \cref{eqn:D_RML_2} to \labelcref{eqn:Cy}.
The ensemble estimate of the gain matrix, denoted $\barK_i$,
thus becomes:
\begin{align}
	\barK_i
	&= \barCx \barObsMatktr \big(\barObsMatk \barCx \barObsMatktr + \Cr\big)^{-1} \notag \\
	\label{eqn:barK_Y}
	&= \X \Y_i\tr \big(\Y_i \Y_i\tr + \cN \Cr\big)^{-1}
	\, ,
\end{align}
where $\Y_i \in \Reals^{P \times N}$ has been defined as the \emph{prior} (i.e. unconditioned) anomalies,
under the action of the $i$-th iterate linearization:
\begin{align}
	\label{eqn:Yk0_def}
	\Y_i &= \barObsMatk \X
	\, .
\end{align}
A Woodbury corollary
can be used to express $\barK_i$ as:
\begin{align}
	\label{eqn:l}
	\barK_i
	&= \X \Cwk \Y_i\tr \Cr^{-1}
	\, ,
\end{align}
with
\begin{align}
	\label{eqn:tP}
	\Cwk &= \big(\Y_i\tr \Cr^{-1} \Y_i + \cN \I_N\big)^{-1}
	\, .
\end{align}
The reason for labelling this matrix with the subscript $\w$ is revealed later.
For now, note that, in the common case of $N \ll P$,
the inversion in \cref{eqn:tP} is significantly cheaper than the inversion in \cref{eqn:barK_Y}.
Another computational benefit is that $\Cwk$ is non-dimensional,
improving the conditioning of the optimization problem \citep{lorenc1997development}.

In conclusion,
the likelihood increment \labelcref{eqn:c_lkl}
is now estimated as:
\begin{align}
	\barDlnk &= \barK_i [\yn{n} - \ObsMod( \xns )]
	\label{eqn:lklhd_inc_bar}
	\, .
\end{align}
This is efficient because
$\barObsMatk$ does not explicitly appear
in $\barK_i$ (neither in formula \labelcref{eqn:barK_Y} nor \labelcref{eqn:l}),
even though it is implicitly present through $\Y_i$ \labelcref{eqn:Yk0_def},
where it multiplies $\X$.
This absence
\begin{inparaenum}[(a)]
\item is reassuring, as the product $\Y_i$ constitutes
a less noisy estimate than just $\barObsMatk$ alone
\citep[][figures 2 and 27, resp.]{chen2012ensemble,emerick2013investigation};
\item constitutes a computational advantage,
as will be shown in \cref{sec:Y};
\item
enables leaving
the type of linearization made for $\ObsMod$ unspecified,
as is usually the case in EnKF literature.
\end{inparaenum}

\subsection{Estimating the prior increment}
\label{sec:prior_inc}
In contrast to the likelihood increment \labelcref{eqn:c_lkl},
the Kalman gain form of the prior increment \labelcref{eqn:c_pri}
explicitly contains the sensitivity matrix, $\Hnk$.
This issue was resolved by \citet{bocquet2012combining}
in their refinement of \citet{sakov2012iterative}
by employing the change of variables:
\begin{align}
	\x(\w) = \bx + \X \w
	\label{eqn:x_CVar}
	\, ,
\end{align}
\noindent
where $\w \in \Reals^N$ is called
the ensemble ``controls'' \citep{bannister2016review}, 
also known as the ensemble ``weights'' \citep{ott2004local},
or ``coefficients'' \citep{bocquet2013joint}.

Denote $\witn$ an ensemble coefficient vector such that
$\x(\witn) = \xns$,
and note that $\x(\e_n) = \xn{n}$,
where $\e_n$ is the $n$-th column of the identity matrix.
Thus, $[\xn{n} - \xns] = \X [ \e_n - \witn ]$,
and the prior increment \labelcref{eqn:c_pri}
with the ensemble estimates becomes:
\begin{align}
	\barDpnk
	&= (\X - \barK_i \Y_i) [\e_n - \witn]
	\label{eqn:prior_inc_bar}
	\, ,
\end{align}
where there is no explicit $\barObsMatk$,
which only appears implicitly through $\Y_i = \barObsMatk \X$,
as defined in \cref{eqn:Yk0_def}
Alternatively,
applying the subspace formula \labelcref{eqn:l}
and using $\I_N = \Cwk (\Cwk)^{-1}$
yields:
\begin{align}
	\barDpnk
	&= \X \Cwk \cN [\e_n - \witn]
	\, .
	\label{eqn:prior_inc_Pw}
\end{align}

\subsection{Justifying the change of variables}
\label{sec:cvar_comment}

\begin{lemm}[Closure]
	Suppose $\E_i$ is generated by EnRML.
	Then, each member (column) of $\E_i$ is in the (prior) ensemble subspace.
	Moreover, $\col(\X_i) \subseteq \col(\X)$.
	\label{lemm:Ak_space}
\end{lemm}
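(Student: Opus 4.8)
The plan is to proceed by induction on the iteration index $i$, taking as the inductive hypothesis the slightly stronger statement that every column of $\E_i$ lies in the prior ensemble subspace, i.e. $\xns - \bx \in \col(\X)$ for each member $n$; equivalently, that there is a matrix $\W_i \in \Reals^{N \times N}$ with $\E_i = \bx \ones\tr + \X \W_i$, whose $n$-th column is a control vector $\witn$ satisfying $\x(\witn) = \xns$. The crux of the argument is the observation that \emph{both} operators appearing in the EnRML update factor through $\X$ on the left, and hence send any vector into $\col(\X)$.

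Concretely, from \cref{eqn:barK_Y} the gain reads $\barK_i = \X \mat{G}_i$ with $\mat{G}_i = \Y_i\tr(\Y_i \Y_i\tr + \cN \Cr)^{-1} \in \Reals^{N \times P}$, so that $\col(\barK_i) \subseteq \col(\X)$; moreover $\X - \barK_i \Y_i = \X(\I_N - \mat{G}_i \Y_i)$, whose column space is likewise contained in $\col(\X)$. Here it is essential that the $\X$ and $\Y_i$ entering the gain are the \emph{prior} anomalies \labelcref{eqn:Yk0_def}, which are fixed across iterations, rather than $\X_i$.

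For the base case $i = 0$ we have $\E_0 = \E = \bx \ones\tr + \X$, so $\W_0 = \I_N$ and $\w_{n,0} = \e_n$, and the hypothesis holds trivially. For the inductive step, the hypothesis $\xns - \bx \in \col(\X)$ is precisely what guarantees the existence of a control $\witn$, so the increments \labelcref{eqn:prior_inc_bar} and \labelcref{eqn:lklhd_inc_bar} are well-defined; writing the update as $\xnl = \xns + \barDpnk + \barDlnk$ and noting that $\barDpnk = (\X - \barK_i \Y_i)[\e_n - \witn]$ and $\barDlnk = \barK_i[\yn{n} - \ObsMod(\xns)]$ both lie in $\col(\X)$ by the factorizations above, we obtain $\xnl - \bx = (\xns - \bx) + \barDpnk + \barDlnk \in \col(\X)$, which closes the induction and proves the first claim.

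The second claim then follows at once: writing $\E_i = \bx \ones\tr + \X \W_i$ and using $\ones\tr \PiAN = \mat{0}$, we get $\X_i = \E_i \PiAN = \X \W_i \PiAN$, whence $\col(\X_i) \subseteq \col(\X)$. I expect the only genuine subtlety — and the step worth isolating explicitly — to be the identification of the common left factor $\X$ in both update operators; once that is in hand the induction is routine, and the iterate mean $\bx_i = \E_i \ones / N$ automatically remains in the affine subspace as an affine combination of its columns.
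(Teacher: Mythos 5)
Your proof is correct and follows essentially the same route as the paper, which proves the lemma ``by noting that $\X$ is the leftmost factor in $\barK_i$, and using induction on the Kalman-gain forms of the increments''; you have merely filled in the details, including the (valid, non-circular) use of the inductive hypothesis to justify the existence of $\witn$ within each step. The derivation of $\col(\X_i)\subseteq\col(\X)$ from $\X_i=\X\W_i\PiAN$ likewise matches the paper's intent.
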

\Cref{lemm:Ak_space} may be proven
by noting that $\X$ is the leftmost factor in $\barK_i$,
and using induction on \cref{eqn:c_pri,eqn:c_lkl}.
Alternatively, it can be deduced \citep{raanes2019adaptive}
as a consequence of the
implicit assumption on the prior that
$\x \sim \NormDist( \bx, \barCx )$.
A stronger result, namely $\col(\X_i) = \col(\X)$,
is conjectured in \cref{sec:EnKF_proofs},
but \cref{lemm:Ak_space} is sufficient for the present purposes:
it implies that there exists $\witn \in \Reals^N$ such that $\x(\witn) = \xns$
for any ensemble member and any iteration.
Thus, the lemma justifies the change of variables \labelcref{eqn:x_CVar}.


Moreover, using the ensemble coefficient vector ($\w$) is theoretically advantageous
as it inherently embodies the restriction to the ensemble subspace.
A practical advantage is that $\w$ is relatively low-dimensional compared to $\x$,
which lowers storage and accessing expenses.

\subsection{Simplifying the regression}
\label{sec:Y}
Recall the definition of \cref{eqn:Yk0_def}: $\Y_i = \barObsMatk \X$.
Avoiding the explicit computation of $\barObsMatk$ used in this product between
the iteration-$i$ estimate $\barObsMatk$ and the initial (prior) $\X$
was the motivation behind the modification
$\barCx \leftarrow \barCxk$ by \citet{chen2013levenberg}.
Here, instead, by simplifying the expression of the regression,
it is shown how to compute $\Y_i$ without first computing $\barObsMatk$.


\subsubsection{The transform matrix}
\label{sec:T}
Inserting the regression $\barObsMatk$ \labelcref{eqn:f} into the definition \labelcref{eqn:Yk0_def},
\begin{align}
	\Y_i
	\label{eqn:m_2}
	&= \ObsMod(\E_i) \, \T_i\pinv
	\, ,
\end{align}
where $\T_i\pinv = \X_i\pinv \X$ has been defined,
apparently requiring the pseudo-inversion of $\X_i$ for each $i$.
But, as shown in \cref{sec:Tinv},
\begin{align}
	\T_i &= \X\pinv \X_i
	\, ,
	\label{eqn:T_def}
\end{align}
which only requires the one-time pseudo-inversion of the prior anomalies, $\X$.
Then, since the pseudo-inversion of $\T_i \in \Reals^{N \times N}$
for $\Y_i$ \labelcref{eqn:m_2}
is a relatively small calculation,
this saves computational time.

The symbol $\T$ has been chosen in reference to deterministic, square-root EnKFs.
Indeed, multiplying \cref{eqn:T_def} on the left by $\X$
and recalling \cref{eqn:ProjPinv,lemm:Ak_space}
produces $\X_i = \X \T_i$.
Therefore, the ``transform matrix'', $\T_i$,
describes the conditioning of the anomalies (and covariance).

Conversely,
\cref{eqn:m_2} can be seen as the ``de-conditioning'' of the posterior observation anomalies.
This interpretation of $\Y_i$ should be contrasted to its definition \labelcref{eqn:Yk0_def},
which presents it as the prior state anomalies ``propagated'' by the linearization of iteration $i$.
The two approaches are known to be ``mainly equivalent''
in the deterministic case \citep{sakov2012iterative}.
To our knowledge, however,
it has not been exploited for EnRML before now,
possibly because the proofs (\cref{sec:Tinv})
are a little more complicated in this stochastic case.

\subsubsection{From the ensemble coefficients}
\label{sec:T_from_W}
The ensemble matrix of iteration $i$ can be written:
\begin{align}
	\E_i = \bx \ones\tr + \X \W_i
	\, ,
	\label{eqn:E_CVar}
\end{align}
where the columns of $\W_i \in \Reals^{N \times N}$ are
the ensemble coefficient vectors \labelcref{eqn:x_CVar}.
Multiplying \cref{eqn:E_CVar} on the right
by $\PiAN$ to get the anomalies produces:
\begin{align}
	\X_i = \X (\W_i \PiAN) \, .
	\label{eqn:XkW}
\end{align}
This seems to indicate that $\W_i \PiAN$ is the transform matrix, $\T_i$,
discussed in the previous subsection.
However, they are not fully equal:
inserting $\X_i$ from \labelcref{eqn:XkW} into \labelcref{eqn:T_def} yields:
\begin{align}
	\T_i
	&= 
	\Pro_{\X\tr} (\W_i \PiAN)
	\, ,
	\label{eqn:T_PWP}
\end{align}
i.e. they are distinguished by $\Pro_{\X\tr} = \X\pinv \X$:
the projection onto the row space of $\X$.

\Cref{sec:red_T} shows that,
in most conditions,
this pesky projection matrix vanishes when
$\T_i$ is used in \cref{eqn:m_2}:
\begin{align}
	\Y_i
	&=
	\ObsMod(\E_i) \, (\W_i \PiAN)\pinv
	\quad \text{if} 
	\begin{dcases}
		\compactN \leq M, \text{ or} \\
		\text{$\ObsMod{}$ is linear.}
	\end{dcases}
	\label{eqn:Yred}
\end{align}
In other words, the projection $\Pro_{\X\tr}$ can be omitted
unless $\ObsMod{}$ is nonlinear \emph{and} the ensemble is larger than the unknown state's dimensionality.

A well known result of \citet{reynolds2006iterative} is that
the first step of the EnRML algorithm (with $\W_0 = \I_N$)
is equivalent to the EnKF.
However, this is only strictly true if
there is no appearance of $\Pro_{\X\tr}$ in EnRML.
The following section explains why
EnRML should indeed always be defined without this projection.

\subsubsection{Linearization chaining}
\label{sec:chain_reg}
Consider applying the change of variables \labelcref{eqn:x_CVar} to $\w$
at the very beginning of the derivation of EnRML.
Since $\X \ones = 0$, there is a redundant degree of freedom in $\w$,
meaning that there is a choice to be made in deriving its density from 
the original one, given by $\Jn(\x)$ in \cref{eqn:Jn}.
The simplest choice \citep[][]{bocquet2015expanding} results in the log-posterior:
\begin{align*}
	\Jwn(\w)
	&=
	{\textstyle \frac{1}{2}}
	\|\w - \e_n\|^2_{\frac{1}{N-1}\I_N}
	+
	{\textstyle \frac{1}{2}}
	\| \ObsMod(\bx{+}\X \w) - \yn{n} \|^2_{\Cr}
	\, .
\end{align*}
Application of the Gauss-Newton scheme with the gradients and Hessian of $\Jwn$,
followed by a reversion to $\x$,
produces the same EnRML algorithm as above.

The derivation summarized in the previous paragraph
is arguably simpler than that of the last few pages.
Notably,
\begin{inparaenum}[(a)]
\item it does not require the Woodbury identity to derive the subspace formulae;
\item there is never an explicit $\barObsMatk$ to deal with;
\item the statistical linearization
of least-squares regression from $\W_i$ to $\ObsMod(\E_i)$
directly yields \cref{eqn:Yred},
except that there are no preconditions.
\end{inparaenum}

While the case of a large ensemble ($\compactN > M$) is not typical in geoscience,
the fact that this derivation does not produce a projection matrix (which requires a pseudo-inversion)
under any conditions begs the questions:
Why are they different? Which version is better?

The answers lie in understanding the linearization of
the map
$\w \mapsto \ObsMod(\bx + \X \w)$,
and noting that,
similarly to analytical (infinitesimal) derivatives,
the chain rule applies for least-squares regression.
In effect,
the product $\Y_i = \barObsMatk \X$,
which implicitly contains the projection matrix $\Pro_{\X\tr}$,
can be seen as
an application of the chain rule for the composite function $\ObsMod (\x (\w))$.
By contrast, \cref{eqn:Yred} -- but without the precondition --
is obtained by direct regression of the composite function.
Typically, the two versions yield identical results
(i.e. the chain rule).
However, since the intermediate space, $\col(\X)$,
is of lower dimensions than the initial domain ($M < \compactN$),
composite linearization results in a loss of information,
manifested by the projection matrix.
Therefore,
the definition $\Y_i = \ObsMod(\E_i) \, (\W_i \PiAN)\pinv$ is henceforth preferred to $\barObsMatk \X$.

Numerical experiments, as in \cref{sec:experiments} but not shown,
indicate no statistically significant advantage for either version.
This corroborates similar findings by \citet{sakov2012iterative}
for the deterministic flavour.
Nevertheless, there is a practical advantage:
avoiding the computation of $\Pro_{\X\tr}$.

\subsubsection{Inverting the transform}
\label{sec:Omeg}
In square-root ensemble filters, the transform matrix should have $\ones$ as an eigenvector
\citep{sakov2008implications,livings2008unbiased}.
By construction, this also holds true for $\W_i \PiAN$, 
with eigenvalue 0.
Now, consider adding $\mat{0} = \X \PiOne$ to \cref{eqn:XkW},
yielding another valid transformation:
\begin{align}
	\X_i &= \X (\underbrace{\W_i \PiAN + \PiOne}_{\Omeg_i})
	\label{eqn:fawef}
	\, .
\end{align}
The matrix $\Omeg_i$, in contrast to $\W_i \PiAN$ and $\T_i$,
has eigenvalue 1 for $\ones$ and is thus invertible.
This is used to prove \cref{eqn:Yred} in \cref{sec:red_T},
where $\Y_i$ is expressed in terms of $\Omeg_i^{-1}$.

Numerically,
the use of $\Omeg_i$ in the computation \labelcref{eqn:Yred} of $\Y_i$
was found to yield stable convergence of the new EnRML algorithm
in the trivial example of $\ObsMod(\x) = \alpha \x$.
By contrast, the use of $(\W \PiAN)\pinv$
exhibited geometrically growing (in $i$) errors when $\alpha>1$.
Other formulae for the inversion are derived in \cref{sec:pinv_v};
the one found to be the most stable is $(\W \PiAN)\pinv = \W^{-1} \PiAN$;
it is therefore preferred in \cref{algo:GN_EnRML}.

Irrespective of the inverse transform formula used,
it is important to retain all non-zero singular values.
This absence of a truncation threshold is a tuning simplification
compared with the old EnRML algorithm,
where $\X$ and/or $\X_i$ was scaled, decomposed, and truncated.
If, by extreme chance or poor numerical subroutines,
the matrix $\W_i$ is not invertible
(this never occurred in any of the experiments except by our explicit intervention;
cf. the conjecture in \cref{sec:EnKF_proofs}),
its pseudo-inversion should be used;
however, this must also be accounted for in the prior increment
by multiplying the formula on line \ref{ln:Jb} on the left
by the projection onto $\W_i$.

\subsection{Algorithm}
\label{sec:algo}

To summarize, \cref{algo:GN_EnRML} provides pseudo-code
for the new EnRML formulation.
The increments
$\barDl$ \labelcref{eqn:lklhd_inc_bar} and
$\barDp$ \labelcref{eqn:prior_inc_Pw}
can be recognized by multiplying line \ref{ln:Wk} on the left by $\X$.
For aesthetics, the sign of the gradients has been reversed.
Note that there is no need for an explicit iteration index.
Nor is there an ensemble index, $n$,
since all $N$ columns are stacked into the matrix $\W$.
However, in case $M$ is large,
$\Y$ may be computed column-by-column to avoid storing $\E$.
\begin{algorithm}[H]
	\caption{Gauss-Newton variant of EnRML \\
	(the stochastic flavour of the IEnKS analysis update)}
	\label[algocf]{algo:GN_EnRML} 
	\begin{algorithmic}[1]
		\REQUIRE prior ens. $\E$, obs. perturb's $\Res$
		\newlength{\mylength}
		\settowidth{\mylength}{$\W$}
		\STATE \makebox[\mylength][l]{$\bx$} $ = \E \ones /N $
		\STATE \makebox[\mylength][l]{$\X$}  $ = \E - \bx \ones\tr $
		\STATE \makebox[\mylength][l]{$\W$}  $ = \I_N$
		\REPEAT
			\settowidth{\mylength}{$\nJy$}
			\STATE Run model (on each col.) to get $\ObsMod(\E)$
      \STATE \makebox[\mylength][l]{$\Y$} $ = \ObsMod(\E) \, \W^{-1} \PiAN$
			\label{ln:Y}
			\STATE \makebox[\mylength][l]{$\nJy$} $ = \Y\tr \Cr^{-1} [\y \ones\tr  + \Res - \ObsMod(\E)] $
			\label{ln:Jl}
			\STATE \makebox[\mylength][l]{$\nJf$} $ = \cN [\I_N - \W]$
			\label{ln:Jb}
			\STATE \makebox[\mylength][l]{$\Cw$} $ = \big(\Y\tr \Cr^{-1} \Y + \cN \I_N\big)^{-1}$
			\label{ln:tPk}
			\STATE \makebox[\mylength][l]{$\W$} $ = \W + \Cw [\nJf + \nJy]$
			\label{ln:Wk}
			\STATE \makebox[\mylength][l]{$\E$} $ = \bx \ones\tr + \X \W$
			\UNTIL{tolerable convergence or max. iterations}
			\RETURN posterior ensemble $\E$
	\end{algorithmic}
\end{algorithm}

~\\ 
Line \ref{ln:Y} is typically computed by solving
$\Y' \W = \ObsMod(\E)$ for $\Y'$ and then subtracting its column mean.
Alternative formulae are discussed in \cref{sec:Omeg}.
Line \ref{ln:tPk} may be computed using a reduced (or even truncated) SVD of $\Cr^{-1/2} \Y$,
which is relatively fast for $N$ both larger and smaller than $P$.
Alternatively, the Kalman gain forms could be used.

The Levenberg-Marquardt variant is obtained by adding the trust-region parameter
$\lambda > 0$ to $\cN$ in the Hessian, line \ref{ln:tPk},
which impacts both the step length and direction.

Localization may be implemented by local analysis
\citep{hunt2007efficient,sakov2011relation};
also see \citet{bocquet2016localization,chen2017localization}.
Here, tapering is applied by replacing the local-domain $\Cr^{-1/2}$
(implicit on lines \ref{ln:Jl} and \ref{ln:tPk})
by $\rho \circ \Cr^{-1/2}$, with $\circ$ being the Schur product,
and $\rho$ a square matrix containing the (square-root) tapering coefficients,
$\rho_{m,l} \in [0,1]$.
If the number of local domains used is large,
so that the number of $\W$ matrices used becomes large,
then it may be more efficient to revert to the original state variables,
and explicitly compute the sensitivities $\barObsMatk$
using the local parts of $\ObsMod(\E_i)$ and $\X_i$.

Inflation and model error parameterizations are not included in the algorithm,
but may be applied outside of it.
We refer to \citet{sakov2018iterative,evensen2018accounting}
for model error treatment with iterative methods.

\section{Benchmark experiments}
\label{sec:experiments}

The new EnRML algorithm produces results that are \emph{identical} to the old formulation,
at least up to round-off and truncation errors, and for $N-1 \leq M$.
Therefore, since there are already a large number of studies of EnRML with reservoir cases
\citep[e.g.][]{chen2013history,emerick2013investigation},
adding to this does not seem necessary.

However, there do not appear to be any studies of EnRML with the  
Lorenz-96 system \citep{lorenz1996predictability}
in a data assimilation setting.
The advantages of this case are numerous:
\begin{inparaenum}[(a)]
\item the model is a surrogate of weather dynamics, and as such holds relevance in geoscience;
\item the problem is (exhaustively) sampled from the system's invariant measure,
	rather than being selected by the experimenter;
\item the sequential nature of data assimilation inherently tests prediction skill,
	which helps avoid the pitfalls of point measure assessment, such as overfitting;
\item its simplicity enhances reliability and reproducibility,
	and has made it a literature standard,
	thus facilitating comparative studies.
\end{inparaenum}

Comparison of the benchmark performance of EnRML
will be made to the IEnKS,
and to ensemble multiple data assimilation (ES-MDA)\footnote{%
	Note that this is MDA in the sense of
	\citet{emerick2013ensemble,stordal2015iterative,kirkpatrick1983optimization},
	where the annealing itself yields iterations,
	and not in the sense of quasi-static assimilation
	\citep{pires1996extending,bocquet2014iterative,fillon2018quasi},
	where it is used as an auxiliary technique.
}.
Both the stochastic and the deterministic (square-root) flavours of ES-MDA
are included,
which in the case of only one iteration (not shown),
result in exactly the same ensembles as EnRML and IEnKS, respectively.
Not included in the benchmark comparisons
is the version of EnRML where the prior increment is dropped (cf. \cref{sec:intro:EnRML}).
This is because the chaotic, sequential nature of this case
makes it practically impossible to achieve good results without propagating prior information.
Similarly, as they lack a dynamic prior, this precludes
``regularizing, iterative ensemble smoothers''
\citep{iglesias2015iterative},
\citep{luo2015iterative},\footnote{%
	Their Lorenz-96 experiment only concerns the initial conditions.}
\citep{mandel2016hybrid}\footnote{%
	Their Lorenz-96 experiment seems to have failed completely,
	with most of the benchmark scores (their Figure 5)
	indicating divergence,
	which makes it pointless to compare benchmarks.
	Also, when reproducing their experiment,
	we obtain much lower scores than they report for the EnKF.
	One possible explanation is that we include, and tune, inflation.
},
even if their background is well-tuned, and their stopping condition judicious.
Because they require the tangent-linear model, $\Hnk$,
RML and EDA/En4DVar \citep{tian2008ensemble,bonavita2012use,jardak2018ensemble1}
are not included.
For simplicity, localization will not be used,
nor covariance hybridization.
Other, related methods may be found in the reviews of
\citet{bannister2016review,carrassi2018da}.


\subsection{Setup}
\label{sec:setup}
The performances of the iterative ensemble smoother methods
are benchmarked with ``twin experiments'',
using the Lorenz-96 dynamical system,
which is configured with standard settings \citep[e.g.][]{ott2004local,bocquet2014iterative},
detailed below.
The dynamics are given by
the $M=40$ coupled ordinary differential equations:
\begin{align}
	\dd{x_m}{t}
	=
	\left( x_{m+1} - x_{m-2} \right) x_{m-1} - x_{m} + F \, ,
	\label{eqn:L96_1}
\end{align}
for $m = 1,\ldots,M$,
with periodic boundary conditions.
These are integrated using the fourth-order Runge-Kutta scheme,
with time steps of 0.05 time units,
and no model noise,
to yield the truth trajectory, $\x(t)$.
Observations of the entire state vector
are taken $\dtObs$
time units apart
with unit noise variance,
meaning $\y(t) = \x(t) + \res(t)$,
for each $t = k \cdot \dtObs$, with $k = 0,1,\ldots,20\,000$,
and $\Cr=\I_M$.

The iterative smoothers are employed in the sequential problem of filtering,
aiming to estimate $\x(t)$ as soon as $\y(t)$ comes in.
In so doing, they also tackle the smoothing problem for $\x(t{-}\dtDAW)$,
where the length of the data assimilation window,
$\dtDAW$, is fixed at a near-optimal value
\citep[inferred from Figures 3 and 4 of][]{bocquet2013joint}
that is also cost efficient (i.e. short).
This window is shifted by $1 \cdot \dtObs$ each time a new observation becomes available.
A post-analysis inflation factor is tuned
for optimal performance for each smoother and each ensemble size, $N$.
Also, random rotations are used to generate the ensembles for the square-root variants.
The number of iterations is fixed, either at $3$ or $10$.
No tuning of the step length is undertaken:
it is $1/3$ or $1/10$ for ES-MDA,
and $1$ for EnRML and the IEnKS.

The methods are assessed by their accuracy,
as measured by root-mean squared error:
\begin{align}
	\RMSE(t) = \sqrt{\frac{1}{M} \big\|\x(t) - \bx(t)\big\|^2_2}
	\, ,
	\label{eqn:RMSE_3}
\end{align}
which is recorded immediately following each analysis of the latest observation $\y(t)$.
The ``smoothing'' error [assessed with $\x(t{-}\dtDAW)$] is also recorded.
After the experiment, the instantaneous $\RMSE(t)$
are averaged for all $t>20$.
The results can be reproduced using Python-code scripts hosted online at
\url{https://github.com/nansencenter/DAPPER/tree/paper_StochIEnS}.
This code reproduces previously published results in the literature.
For example, our benchmarks obtained with the IEnKS
can be cross-referenced with the ones reported by \citet[][Figure 7a]{bocquet2014iterative}.

\subsection{Results}
\label{sec:Results}
\begin{figure*}
	\includegraphics[trim={0.0cm 0cm 0.0cm 0cm},clip]{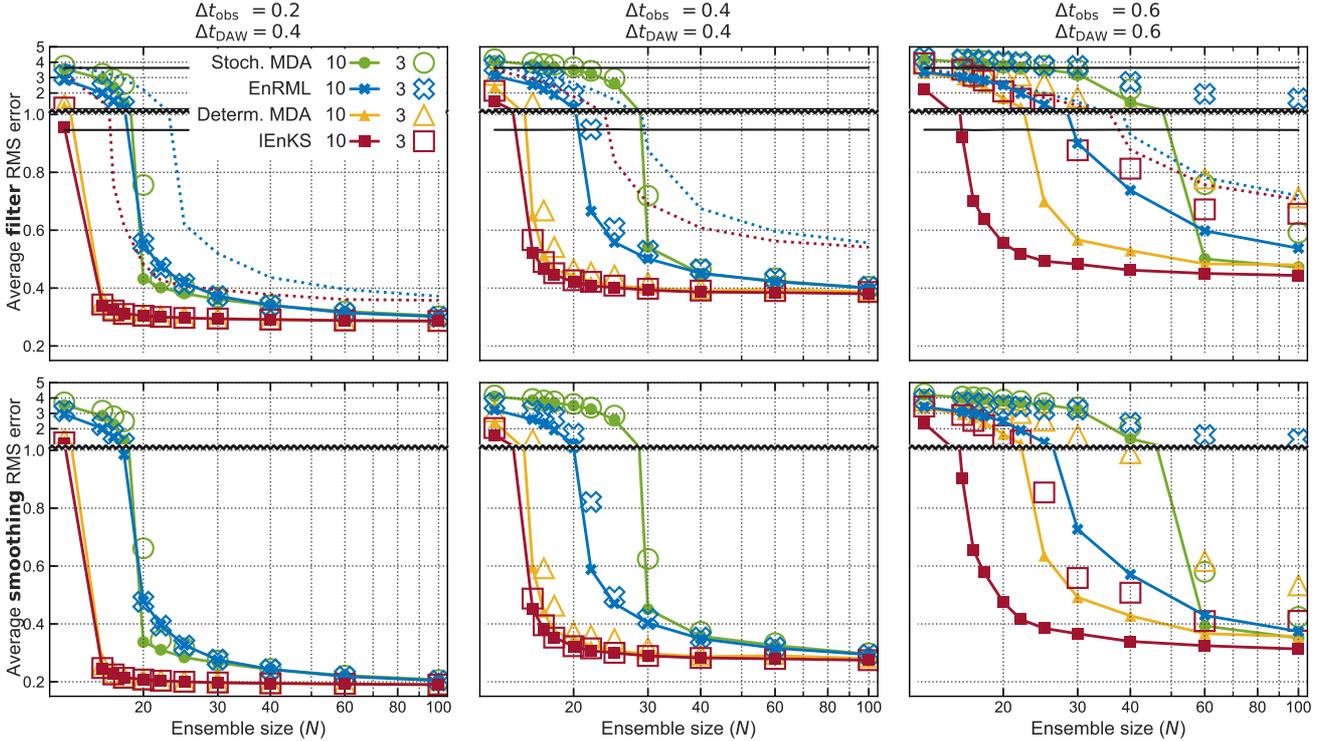}
	\caption
	{
		Benchmarks of filtering (upper panels) and smoothing (lower panels) accuracy,
		in three configurations of the Lorenz-96 system,
		plotted as functions of $N$.
		The $y$-axis changes resolution at $y=1$.
		Each iterative ensemble smoother (coloured, solid line) is plotted
		for 3 (hollow markers) and 10 (compact markers) iterations.
		It can be seen that the deterministic (i.e. square-root) methods systematically 
		achieve lower RMSE averages.
		For perspective, the black lines at $y = 3.6$ and $y = 0.94$
		show the average RMSE scores of the climatological mean,
		and of the optimal interpolation method,
		respectively.
		The dotted lines show the scores of the stochastic (blue) and deterministic (red) EnKF.
	}
	\label{fig:benchmarks}
\end{figure*}

A table of RMSE averages is compiled for a range of $N$,
and then plotted as curves for each method, in \cref{fig:benchmarks}.
The upper panels report the analysis RMSE scores,
while the lower panels report the smoothing RMSE scores.
The smoothing scores are systematically lower,
but the relative results are highly similar. 
Moving right among the panels
increases $\dtObs$,
and thus the nonlinearity;
naturally, all of the RMSE scores also increase.
As a final ``sanity check'', note that
the performances of all of the ensemble methods improve with increasing $N$,
which needs to be at least $15$ for tolerable performance,
corresponding to the rank of the unstable subspace of the dynamics
plus $1$ \citep{bocquet2016aus4d}.

For experiments with $\dtObs \leq 0.4$,
using $3$ iterations is largely sufficient,
since its markers are rarely significantly higher
than those of $10$ iterations.
On the other hand, for the highly nonlinear experiment where $\dtObs = 0.6$,
there is a significant advantage in using $10$ iterations.

The deterministic (square-root) IEnKS and ES-MDA
score noticeably lower RMSE averages than
the stochastic IEnKS (i.e. EnRML) and ES-MDA,
which require $N$ closer to $30$ for good performance.
This is qualitatively the same result as obtained for non-iterative methods
\citep[e.g.][]{sakov2008implications}.
Also tested (not shown) was the first-order-approximate
deterministic flavour of ES-MDA \citep{emerick2018deterministic};
it performed very similarly to the square-root flavour.

Among the stochastic smoothers,
the one based on Gauss-Newton (EnRML)
scores noticeably lower averages than the one based on annealing (ES-MDA)
-- when the nonlinearity is strong ($\dtObs \geq 0.4$), and for small $N$.
A similar trend holds for the deterministic smoothers:
the IEnKS performs better than ES-MDA for $\dtObs = 0.6$.
The likely explanation for this result is that EnRML/IEnKS can iterate indefinitely,
while ES-MDA may occasionally suffer from not ``reaching'' the optimum.

Furthermore, the performance of EnRML/IEnKS 
could possibly be improved by lowering the step lengths,
to avoid causing ``unphysical'' states,
and to avoid ``bouncing around'' near the optimum.
The tuning of the parameter that controls the step length,
(e.g. the trust-region parameter and the MDA-inflation parameter)
has been the subject of several studies
\citep{chen2012ensemble,bocquet2012combining,ma2017robust,le2016adaptive,rafiee2017theoretical}.
However, our superficial trials with this parameter (not shown) yielded little or no improvement.

\section{Summary}                                                                         
\label{sec:Summary}

This paper has presented a new and simpler (on paper and computationally) formulation of
the iterative, stochastic ensemble smoother known as ensemble randomized maximum likelihood (EnRML).
Notably, there is no explicit computation of the sensitivity matrix $\barObsMatk$,
while the product $\Y_i = \barObsMatk \X$ is computed without any pseudo-inversions of the matrix of state anomalies.
This fixes issues of noise, computational cost, and covariance localization,
and there is no longer any temptation to omit the prior increment from the update.
Moreover, the Levenberg-Marquardt variant
is now a trivial modification of the Gauss-Newton variant.

The new EnRML formulation was obtained
by improvements to the background theory and derivation.
Notably,
\cref{theo:sens} established the relation of
the ensemble-estimated, least-squares linear regression coefficients, $\barObsMatk$,
to ``average sensitivity''.
\Cref{sec:Y} then showed that the computation of its action on the prior anomalies,
$\Y_i = \barObsMatk \X$, simplifies into a de-conditioning transformation, $\Y_i = \ObsMod(\E_i) \, \T_i\pinv$.
Further computational gains resulted from
expressing $\T_i$ in terms of the coefficient vectors, $\W_i$,
except that it also involves the ``annoying'' $\Pro_{\X\tr}$.
Although it usually vanishes, the appearance of this projection
is likely the reason why most expositions of the EnKF
do not venture to declare that its implicit linearization of $\ObsMod$
is that of least-squares linear regression.
\Cref{sec:chain_reg} showed that the projection is
merely the result of using the chain rule for indirect regression to the ensemble space,
and argued that it is preferable to use the direct regression of the standard EnKF.

The other focus of the derivation was rank issues,
with $\barCx$ not assumed invertible.
Using the Woodbury matrix lemma,
and avoiding implicit pseudo-inversions and premature insertion of SVDs,
it was shown that the rank deficiency invalidates the Hessian form of the RML update,
which should be restricted to the ensemble subspace.
On the other hand, the subspace form and Kalman gain form of the update
remain equivalent and valid.
Furthermore, \cref{theo:W_full} of \cref{sec:EnKF_proofs}
proves that the ensemble does not lose rank during the updates of EnRML (or EnKF).

The paper has also drawn significantly on the theory of
the deterministic counterpart to EnRML:
the iterative ensemble Kalman smoother (IEnKS).
Comparative benchmarks using the Lorenz-96 model
with these two and the ensemble multiple data assimilation (ES-MDA) smoother
were shown in \cref{sec:experiments}.
In the case of small ensembles and large nonlinearity,
EnRML (resp. IEnKS) achieved better accuracy
than stochastic (resp. deterministic) ES-MDA.
Similarly to the trend for non-iterative filters,
the deterministic smoothers systematically obtained better accuracy
than the stochastic smoothers.


\appendix

\section{Proofs}
\label{sec:EnKF_proofs}

\subsection{Preliminary}

\begin{proof}[Proof of \cref{theo:sens}]
	Assume $0 < |\Cx| < \infty$,
	and that each element of $\Chx$ and $\Expect [\ObsMod'(\x)]$ is finite.
	Then $\barCx$ is a strongly consistent estimator of $\Cx$.
	Likewise,
	$\barChx \rightarrow \Chx$ almost surely, as $N \rightarrow \infty$.
	Thus, since $\barObsMat = \barChx \, \barCx^{-1}$ for sufficiently large $N$,
	Slutsky's theorem yields
	$\barObsMat \rightarrow \Chx \, \Cx^{-1}$,
	almost surely.
	%
	The equality to $\Expect [\ObsMod'(\x)]$
	follows directly from ``Stein's lemma'' \citep{liu1994siegel}.
\end{proof}

\begin{theo}[EnKF rank preservation]
	The posterior ensemble's covariance,
	obtained using the EnKF, has the same rank as the prior's,
	almost surely (a.s.).
	\label{theo:W_full}
\end{theo}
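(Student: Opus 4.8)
The plan is to reduce the claimed equality of covariance ranks to the invertibility of the ensemble transform matrix, and then to establish that invertibility almost surely by a genericity argument that exploits the randomness of the observation perturbations $\Res$. First I would record that covariance rank equals anomaly rank, since $\rank(\A \A\tr) = \rank(\A)$ for any matrix $\A$; thus $\rank(\barCx) = \rank(\X)$ for the prior and $\rank(\fracN \X_1 \X_1\tr) = \rank(\X_1)$ for the posterior. Recognising the EnKF as the first step of \cref{algo:GN_EnRML} (with $\W_0 = \I_N$, for which $\nJf = \cN[\I_N - \W_0] = \mat{0}$), line~\ref{ln:Wk} gives $\W_1 = \I_N + \Cw \Y\tr \Cr^{-1}[\y \ones\tr + \Res - \ObsMod(\E)]$. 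Writing the posterior anomalies through the augmented transform $\X_1 = \X \Omeg_1$ of \cref{eqn:fawef}, and using $\PiAN + \PiOne = \I_N$ together with $[\y \ones\tr + \Res - \ObsMod(\E)]\PiAN = \Res \PiAN - \Y$, yields
\begin{align}
	\Omeg_1 &= \I_N + \Cw \Y\tr \Cr^{-1} (\Res \PiAN - \Y)
	\, .
\end{align}
By \cref{lemm:Ak_space} we already have $\col(\X_1) \subseteq \col(\X)$, so $\rank(\X_1) \leq \rank(\X)$; since right-multiplication by an invertible matrix preserves rank, the full equality -- and hence the theorem -- follows once $\Omeg_1$ is shown to be invertible.

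The decisive task is therefore to prove that $\Omeg_1$ is invertible almost surely. Conditional on the prior ensemble, the matrices $\Y = \ObsMod(\E)\PiAN$ and $\Cw = (\Y\tr \Cr^{-1}\Y + \cN \I_N)^{-1}$ are deterministic, so $\Omeg_1$ is an \emph{affine} function of the Gaussian perturbation matrix $\Res$. Consequently $\det \Omeg_1$ is a polynomial in the entries of $\Res$, and its zero set is either all of $\Reals^{P \times N}$ or a Lebesgue-null algebraic hypersurface. Because $\Cr$ is invertible, the law of $\Res$ is absolutely continuous with respect to Lebesgue measure, so a Lebesgue-null set is also null for $\Res$; it therefore suffices to exhibit a single $\Res$ at which $\Omeg_1$ is invertible.

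I expect this non-vanishing to be the only real obstacle, but it is dispatched cleanly by evaluating at $\Res = \Y$: since $\Y \PiAN = \Y$ (as $\Y$ is already a matrix of anomalies), the factor $\Res \PiAN - \Y$ vanishes there, giving $\Omeg_1 = \I_N$ and $\det \Omeg_1 = 1 \neq 0$. Hence $\det \Omeg_1$ is not the zero polynomial, its roots form a null set, and $\Omeg_1$ is invertible almost surely, which completes the proof. Finally I would note that linearity of $\ObsMod$ is never used: at the first iteration $\Y$ stays deterministic given the prior ensemble for any $\ObsMod$, so the result holds in the nonlinear case too. The analogous statement across \emph{several} iterations, where $\Res$ re-enters through the evolving coefficient matrices $\W_i$, is the harder claim left as a conjecture.
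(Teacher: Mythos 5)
Your proof is correct, but it reaches the almost-sure invertibility of the transform by a genuinely different route from the paper's. Both arguments first reduce the theorem to showing that the (augmented) ensemble transform is a.s.\ nonsingular: you work with $\Omeg_1 = \I_N + \Cw\Y\tr\Cr^{-1}(\Res\PiAN - \Y)$, the paper with $\Ups = \I_N + \Y\tr\Cr^{-1}\Res/\cN$ and $\T^a = \cN\,\Cw\Ups\PiAN$, and the two are consistent via $\Omeg_1 = \cN\,\Cw\Ups\PiAN + \PiOne$. The paper then proceeds column by column: each column $\ups_n$ of $\Ups$ is an absolutely continuous Gaussian supported on the affine flat $\e_n + \col(\bS\tr)$, that flat is not contained in the span of the previously treated columns together with the remaining identity columns (because $\e_n$ itself is not), so by induction no column falls into that span a.s. You instead condition on the prior ensemble (so that $\Y$ and $\Cw$ are fixed), observe that $\det\Omeg_1$ is a polynomial in the entries of $\Res$, exhibit the witness $\Res = \Y$ at which $\Omeg_1 = \I_N$, and invoke the fact that a nonzero polynomial vanishes only on a Lebesgue-null set, which is null for the law of $\Res$ since $\Cr$ is invertible. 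Your argument is shorter and arguably more transparent, and the one structural hypothesis it leans on --- that $\Y$ is determined by the prior ensemble alone and independent of $\Res$ --- is exactly right at the first iteration; you correctly identify that this is what breaks at later iterations, where $\Res$ re-enters through $\W_i$, which is why the multi-iteration statement is left as \cref{conj:rank_EnRML}. The only omission is that the paper's proof also dispatches the deterministic (square-root) EnKF, for which the transform $\sqrt{N-1}\,\Cw^{-1/2}$ (or an orthogonal rotation thereof) is trivially invertible; your proof addresses only the stochastic flavour, which is admittedly the only case in which ``almost surely'' carries any content.
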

\begin{proof}
	The updated anomalies,
	both for the square-root and the stochastic EnKF,
	can be written
	$\X^a = \X \T^a$
	for some $\T^a \in \Reals^{N \times N}$.
   
	For a deterministic EnKF, 
  $\T^a = \sqrt{N-1}\Cw^{-1/2}$ for the symmetric positive definite square root of $\Cw$,
	or an orthogonal transformation thereof \citep[][]{sakov2008implications}.
	Hence $\rank(\X^a) = \rank(\X)$.

	For the stochastic EnKF, \cref{eqn:lklhd_inc_bar,eqn:l} may be used to show that
	$\T^a = \cN \Cw \Ups \PiAN$,
	with $\Ups = \I_N + \Y\tr \Cr^{-1} \D / \cN$.
	Hence, for rank preservation, it will suffice to show that $\Ups$ is a.s. full rank.

	We begin by writing $\Ups$ more compactly:
	\begin{align}
		\Ups = \I_N + \bS\tr \Z
		\quad \text{ with }
		\begin{dcases}
			\bS = \cN^{-1/2} \Cr^{-1/2} \Y \, , \\
			\Z = \cN^{-1/2} \Cr^{-1/2} \D \, .
		\end{dcases}
		\label{eqn:Ups}
	\end{align}
	\newcommand{\SSn}[1]{\mathcal{S}_{#1}}%
	From \cref{eqn:res,eqn:concat_D,eqn:Ups} it can be seen that
	column $n$ of $\Z$ follows the law $\z_n \sim \NormDist(\bvec{0}, \I_P/\cN)$.
	Hence, column $n$ of $\Ups$ follows $\ups_n \sim \NormDist(\e_n, \bS\tr \bS/\cN)$,
	and has sample space:
	\begin{align}
		\SSn{n} = \{\ups \in \Reals^N \tq \ups = \e_n + \bS\tr \z \}
		\label{eqn:SSn}
		\, .
	\end{align}
	%
	%
	Now consider, for $n=0,\ldots,N$,
	the hypothesis:
	\begin{align}
		\rank([\Ups_{:n},\ \I_{n:}]) = N
		\, , \tag{$\tn{H}_n$}
	\end{align}
	where $\Ups_{:n}$ denotes the first $n$ columns of $\Ups$,
	and $\I_{n:}$ denotes the last $N-n$ columns of $\I_N$.
	Clearly, $\tn{H}_{0}$ is true.
	Now, suppose $\tn{H}_{n-1}$ is true.
	Then the columns of $[\Ups_{:n-1},\ \I_{n-1:}]$ are all linearly independent.
	For column $n$, this means that
	$\e_n \notin \col([\Ups_{:n-1}, \ \I_{n:}])$.
	By contrast, from \cref{eqn:SSn}, $\e_n \in \SSn{n}$.
	The existence of a point in $\SSn{n} \setminus \col([\Ups_{:n-1}, \ \I_{n:}])$
		means that
	\begin{align}
		\label{eqn:Hn5}
		\dim\big( \SSn{n} \cap &\col([\Ups_{:n-1}, \ \I_{n:}]) \big) < \dim (\SSn{n})
		\, .
	\end{align}
	Since $\ups_n$ is absolutely continuous with sampling space $\SSn{n}$,
	\cref{eqn:Hn5} means that	the probability that
	$\ups_n \in \col([\Ups_{:n-1}, \ \I_{n:}])$ is zero.
	This implies $\tn{H}_{n}$ a.s.,
	establishing the induction.
	Identifying the final hypothesis ($\tn{H}_{N}$) with $\rank(\Ups) = N$
  concludes the proof.
\end{proof}
A corollary of \cref{theo:W_full} and \cref{lemm:Ak_space}
is that the ensemble subspace is also unchanged by the EnKF update.
Note that both the prior ensemble and the model (involved through $\Y$)
are arbitrary in \cref{theo:W_full}.
However, $\Cr$ is assumed invertible.
The result is therefore quite different from the topic discussed by 
\citet{kepert2004ensemble,evensen2004sampling},
where rank deficiency arises due to a reduced-rank $\Cr$.


\begin{conj}
	\label{conj:rank_EnRML}
	The rank of the ensemble is preserved by the EnRML update (a.s.)
	and $\W_i$ is invertible.
\end{conj}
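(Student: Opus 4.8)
The plan is to prove both assertions simultaneously by induction on the iteration index $i$, since they are intertwined: line \ref{ln:Y} of \cref{algo:GN_EnRML} already presupposes that $\W_i$ is invertible (it inverts $\W_i$ to form $\Y_i$), while preservation of the ensemble rank is controlled by the transform $\Omeg_i = \W_i \PiAN + \PiOne$ of \cref{eqn:fawef}. Indeed, \cref{eqn:fawef} gives $\X_i = \X \Omeg_i$, so if $\Omeg_i$ is nonsingular then $\rank(\X_i) = \rank(\X)$; and since $\Omeg_i$ fixes $\ones$ and coincides with $\W_i$ on $\ones^\perp$, its non-singularity is closely tied to that of $\W_i$. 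The base case $\W_0 = \I_N$ is immediate, and the first step reduces to the EnKF, so that preservation of the \emph{anomaly} rank at $i = 1$ is already delivered by \cref{theo:W_full}. What remains is to propagate invertibility of $\W_i$ (which is the genuinely new ingredient, not covered by \cref{theo:W_full}) across a general Gauss-Newton step, and with it the non-singularity of $\Omeg_i$.

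For the inductive step I would first collapse the update on line \ref{ln:Wk} into closed form. Using the identity $\cN \Cwk = \I_N - \Cwk \Y_i\tr \Cr^{-1} \Y_i$, which rearranges the definition \labelcref{eqn:tP} of $\Cwk$, the prior-increment term on line \ref{ln:Jb} cancels the leading $\W_i$, and the update becomes
\begin{align}
	\W_{i+1} &= \I_N + \Cwk \Y_i\tr \Cr^{-1} \big[ \y \ones\tr + \Res - \ObsMod(\E_i) - \Y_i (\I_N - \W_i) \big]
	\, .
	\label{eqn:Wplus_plan}
\end{align}
This is an update of the identity of rank at most $\compactN$, structurally mirroring the matrix $\Ups$ of \cref{eqn:Ups}: a correction of the form $\Cwk \Y_i\tr \Cr^{-1}$ times a data-dependent matrix. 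I would then factor $\W_{i+1}$ (and likewise $\Omeg_{i+1}$) as an invertible matrix times a single matrix $\Ups_i$ playing the role of $\Ups$, reducing both claims to showing that $\Ups_i$ has full rank almost surely. At that point I would attempt to reuse the column-by-column induction of \cref{theo:W_full} verbatim: each column of $\Ups_i$ is a fixed standard basis vector $\e_n$ plus a stochastic perturbation confined to $\col(\Y_i\tr)$, and one argues that it a.s. escapes the span of the remaining columns.

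The hard part -- and the reason the statement is only conjectured -- is precisely this last, probabilistic step. In \cref{theo:W_full} it works because the perturbation added to each $\e_n$ is a \emph{fresh} Gaussian with full-dimensional, absolutely continuous support, independent of the subspace it must avoid; the escape then holds a.s. by \cref{eqn:Hn5}. In the iterative setting this independence is lost: the perturbation matrix $\Res$ is held fixed across \emph{all} iterations and re-enters \cref{eqn:Wplus_plan} nonlinearly through $\ObsMod(\E_i)$ (and through $\Y_i$, $\Cwk$, and $\W_i$), so the very subspace to be avoided at iteration $i{+}1$ is an intricate, implicit function of the same randomness. One therefore cannot simply condition and invoke absolute continuity as before.

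The most promising route I see is a genericity argument under the assumption that $\ObsMod$ is real-analytic. Then the Gauss-Newton map $\Res \mapsto \W_{i+1}$ is real-analytic on the open set where all intermediate iterates remain invertible, so $\det(\W_{i+1})$ is a real-analytic function of $\Res$, and hence either vanishes on a Lebesgue-null set or is identically zero. The crux would be to exclude the degenerate alternative for every $i$ -- e.g. by exhibiting one configuration (the linear case, or the limit of vanishing nonlinearity, where \cref{eqn:Wplus_plan} reduces to an iterated EnKF transform that is non-singular by \cref{theo:W_full}) at which $\det(\W_{i+1}) \neq 0$ -- together with a uniform-in-$i$ control guaranteeing that the iterates do not leave the domain of analyticity on a full-measure set. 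Making this control rigorous through the nonlinear, reused-noise recursion is exactly the obstacle that keeps \cref{conj:rank_EnRML} open.
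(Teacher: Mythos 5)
The statement you are asked to prove is, in the paper, explicitly an open conjecture: the authors write that they ``were not able to prove'' it, offering only the heuristic that it seems a logical extension of \cref{theo:W_full} plus numerical support. So there is no paper proof to match against, and your proposal—candidly presented as a plan rather than a proof—does not close the gap either. To its credit, everything you do carry out is sound: the reduction via $\X_i = \X\,\Omeg_i$ and \cref{lemm:Omeg}, the base case, the observation that the $i=1$ anomaly-rank statement follows from \cref{theo:W_full}, and the closed-form collapse of line \ref{ln:Wk} using $\cN\Cwk = \I_N - \Cwk\Y_i\tr\Cr^{-1}\Y_i$ (the algebra checks out: the $\W_i$ terms cancel and one is left with the identity plus a correction of rank at most $\compactN$). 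Most importantly, you put your finger on exactly the right obstruction: the column-by-column induction of \cref{theo:W_full} relies on each column's perturbation being an absolutely continuous random vector that can be conditioned independently of the subspace it must avoid, whereas for $i \geq 2$ the same realization of $\Res$ has already been consumed, nonlinearly, in building $\E_i$, $\Y_i$, $\Cwk$, and hence the very subspace in question. That is precisely why the naive extension fails and why the paper leaves this as a conjecture.

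The one point where you should be more careful is the genericity route you sketch at the end. Real-analyticity of $\Res \mapsto \det(\W_{i+1})$ would give you ``either a.s.\ nonzero or identically zero'' only if the map is analytic on a \emph{connected} full-measure domain; the domain you describe (where all intermediate iterates remain invertible) is an open set whose complement is itself the zero set you are trying to control, so the argument is circular unless you first establish connectedness or analyticity on all of $\Reals^{P\times N}$. Moreover, exhibiting a single nondegenerate configuration in the linear limit rules out identical vanishing only on the connected component containing that configuration. These are not fatal objections—an analytic-continuation argument in $(\Res, \text{nonlinearity})$ jointly might work—but as stated the plan does not yet reduce the conjecture to a checkable condition. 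In short: your diagnosis of why this is hard agrees with the paper's own, your partial computations are correct and would be a useful first step in any eventual proof, but the statement remains unproven by your proposal, exactly as it remains unproven in the paper.
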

We were not able to prove \cref{conj:rank_EnRML},
but it seems a logical extension of \cref{theo:W_full},
and is supported by numerical trials.
The following proofs utilize \cref{conj:rank_EnRML},
without which some projections will not vanish.
Yet, even if \cref{conj:rank_EnRML} should not hold
(due to bugs, truncation, or really bad luck),
\cref{algo:GN_EnRML} is still valid and optimal,
as discussed in \cref{sec:chain_reg,sec:Omeg}.

\subsection{The transform matrix}
\label{sec:Tinv}

\begin{theo}[]
	$(\X\pinv \X_i)\pinv = \X_i\pinv \X$.
	\label{lemm:T_pinv}
\end{theo}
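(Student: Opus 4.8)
The plan is to prove the claimed identity $(\X\pinv\X_i)\pinv = \X_i\pinv\X$ by verifying that the candidate matrix $G := \X_i\pinv\X$ satisfies the four defining Moore--Penrose conditions for the pseudo-inverse of $\T_i := \X\pinv\X_i$; uniqueness of the pseudo-inverse then finishes. Conceptually this is just the reverse-order law $(\X\pinv\X_i)\pinv = \X_i\pinv(\X\pinv)\pinv = \X_i\pinv\X$ (using $(\X\pinv)\pinv = \X$), which holds here thanks to the single structural fact that the prior and posterior anomalies share a column space. I would obtain that fact from \cref{eqn:fawef}, $\X_i = \X\Omeg_i$, together with the invertibility of $\Omeg_i$ (equivalently of $\W_i$, from \cref{conj:rank_EnRML}): since right-multiplication by an invertible matrix preserves the column space, $\col(\X_i) = \col(\X)$, so the orthogonal projectors coincide, $\Pro_{\X_i} = \Pro_\X$. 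I would also record the standard consequences of \cref{eqn:ProjPinv}: $\X\X\pinv = \Pro_\X$, $\X_i\X_i\pinv = \Pro_{\X_i}$, $\X\pinv\X = \Pro_{\X\tr}$, and $\X_i\pinv\X_i = \Pro_{\X_i\tr}$.

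The heart of the argument is evaluating the two products, where the substitution $\Pro_{\X_i} = \Pro_\X$ lets the inner projector be absorbed:
\begin{align*}
	\T_i G &= \X\pinv \X_i \X_i\pinv \X = \X\pinv \Pro_{\X_i} \X = \X\pinv \Pro_{\X} \X = \X\pinv \X = \Pro_{\X\tr} \, , \\
	G \T_i &= \X_i\pinv \X \X\pinv \X_i = \X_i\pinv \Pro_{\X} \X_i = \X_i\pinv \Pro_{\X_i} \X_i = \X_i\pinv \X_i = \Pro_{\X_i\tr} \, .
\end{align*}
Both products are orthogonal projectors, hence symmetric, which disposes of the two symmetry conditions $(\T_i G)\tr = \T_i G$ and $(G\T_i)\tr = G\T_i$. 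The remaining two conditions follow by reusing these identities: $(\T_i G)\T_i = \Pro_{\X\tr}\X\pinv\X_i = \X\pinv\X\X\pinv\X_i = \X\pinv\X_i = \T_i$, and $G(\T_i G) = \X_i\pinv\X\X\pinv\X = \X_i\pinv\Pro_{\X}\X = \X_i\pinv\X = G$, using $\Pro_{\X\tr}\X\pinv = \X\pinv$ and $\Pro_{\X}\X = \X$.

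The step I expect to be the crux is the symmetry of $\T_i G$, and it is exactly here that the full-rank hypothesis is indispensable. The bare closure property of \cref{lemm:Ak_space}, $\col(\X_i)\subseteq\col(\X)$, yields only $\T_i G = \X\pinv\Pro_{\X_i}\X$; writing a thin SVD $\X = \U\Sig\V\tr$ turns this into $\V\Sig^{-1}(\U\tr\Pro_{\X_i}\U)\Sig\V\tr$, which is \emph{not} symmetric unless the inner projector commutes with $\Sig^2$ — the degenerate possibility that is precisely excluded once $\Pro_{\X_i}=\Pro_\X$ reduces that projector to the identity on $\col(\U)$. This is why the proof must invoke \cref{conj:rank_EnRML} (as flagged in the surrounding text) and not merely \cref{lemm:Ak_space}; by contrast, the symmetry of $G\T_i$ and the two rank conditions need only closure.
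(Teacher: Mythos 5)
Your proposal is correct and takes essentially the same route as the paper: both verify the four Moore--Penrose conditions for $\X_i\pinv\X$ as the pseudo-inverse of $\X\pinv\X_i$, relying on the closure property of \cref{lemm:Ak_space} for the product conditions and on \cref{conj:rank_EnRML} (equivalently $\col(\X_i)=\col(\X)$) for the symmetry of $\T_i\X_i\pinv\X$. Your explicit identification of the two products as $\Pro_{\X\tr}$ and $\Pro_{\X_i\tr}$, and the SVD remark pinpointing exactly where full rank is indispensable, are a slightly more detailed but not fundamentally different presentation.
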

\begin{proof}
	Let $\T = \X\pinv \X_i$ and $\bS = \X_i\pinv \X$.
	The following shows that $\bS$ satisfies
	the four properties of the Moore-Penrose characterization of the pseudo-inverse of $\T$:
	\begin{enumerate}
		\item \label{item:1}
			$\begin{aligned}[t]
				\T \bS \T
				&=
				( \X\pinv \X_i )
				( \X_i\pinv \X )
				( \X\pinv \X_i )
				\\
				&=
				\X\pinv
				\Pro_{\X_i}
				\Pro_{\X}
				\X_i
				&[\Pro_\A = \A\A\pinv]
				\\
				&=
				\X\pinv
				\Pro_{\X_i}
				\X_i
				&[\text{\cref{lemm:Ak_space}}]
				\\
				&=
				\T
				\, .
				&[\Pro_\A \A = \A]
			\end{aligned}$
		\item \label{item:2} $\bS \T \bS  = \bS$,
			as may be shown similarly to point \labelcref{item:1}.
		\item \label{item:3} $\T \bS = 
			\X\pinv \X
			$,
			as may be shown similarly to point \labelcref{item:1},
			using \cref{conj:rank_EnRML}.
			The symmetry of $\T \bS$ follows from that of
			$\X\pinv \X$.
		\item \label{item:4} The symmetry of $\bS \T$ is shown as for point \labelcref{item:3}.
			\vspace{-2em}
	\end{enumerate}
\end{proof}

This proof was heavily inspired by appendix A of \citet{sakov2012iterative}.
However, our developments apply for EnRML
(rather than the deterministic, square-root IEnKS).
This means that $\T_i$ is not symmetric, which complicates the proof
in that the focus must be on $ \X\pinv \X_i $ rather than $\X_i\pinv$ alone.
Our result also shows the equivalence of $\bS\pinv$ and $\T$ in general,
while the additional result of the vanishing projection matrix in
the case of $N-1 \leq M$ is treated separately, in \cref{sec:red_T}.

\subsection{Proof of \cref{eqn:Yred}}
\label{sec:red_T}

\begin{lemm}[]
	$\Omeg_i$ is invertible (provided $\W_i$ is).
	\label{lemm:Omeg}
\end{lemm}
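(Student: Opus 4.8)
The plan is to show that the square matrix $\Omeg_i = \W_i\PiAN + \PiOne$ of \labelcref{eqn:fawef} has trivial kernel, which for a square matrix is equivalent to invertibility. The key preliminary fact I would establish first is that the columns of $\W_i$ sum to unity, i.e. $\ones\tr\W_i = \ones\tr$. This is an invariant of the EnRML iteration of \cref{algo:GN_EnRML}: it holds trivially for the initialization $\W_0 = \I_N$, and the weight update preserves it. Indeed, since $\Y$ is a genuine anomaly matrix ($\Y\ones = \bvec{0}$, owing to the $\PiAN$ factor on line~\ref{ln:Y}), the vector $\ones$ is an eigenvector of $\Cw$ \labelcref{eqn:tP} with eigenvalue $\fracN$, so that $\ones\tr\Cw = \fracN\ones\tr$; moreover $\ones\tr\nJy = (\Y\ones)\tr\Cr^{-1}[\y\ones\tr + \Res - \ObsMod(\E)] = 0$, while $\ones\tr\nJf = \cN[\ones\tr - \ones\tr\W]$. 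Substituting these into the update on line~\ref{ln:Wk} and using $\fracN\cN = 1$ collapses the right-hand side to $\ones\tr$, regardless of the incoming value.

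With this invariant in hand, I would left-multiply the definition $\Omeg_i = \W_i\PiAN + \PiOne$ by $\ones\tr$ and invoke $\ones\tr\W_i = \ones\tr$ together with $\PiAN + \PiOne = \I_N$ to obtain $\ones\tr\Omeg_i = \ones\tr$; that is, $\ones$ is a left-eigenvector of $\Omeg_i$ with eigenvalue $1$. Now suppose $\Omeg_i\bv = \bvec{0}$. Applying $\ones\tr$ yields $\ones\tr\bv = 0$, so $\bv$ lies in the orthogonal complement of $\ones$, whence $\PiOne\bv = \bvec{0}$ and $\PiAN\bv = \bv$. The kernel equation then reduces to $\Omeg_i\bv = \W_i\PiAN\bv + \PiOne\bv = \W_i\bv = \bvec{0}$, and the invertibility of $\W_i$ (\cref{conj:rank_EnRML}) forces $\bv = \bvec{0}$. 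Hence $\Omeg_i$ is injective, and therefore invertible.

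I expect the main obstacle to be the first paragraph: making the column-sum invariant $\ones\tr\W_i = \ones\tr$ rigorous and attributable to the algorithm, rather than merely asserting it. The subtlety is that invertibility of $\W_i$ alone is \emph{not} sufficient: a one-line computation via the matrix determinant lemma gives $\det\Omeg_i = \det(\W_i)\,\tfrac{1}{N}\,\ones\tr\W_i^{-1}\ones$, so one genuinely needs $\ones\tr\W_i^{-1}\ones \neq 0$. It is precisely the invariant $\ones\tr\W_i = \ones\tr$ (which gives $\ones\tr\W_i^{-1} = \ones\tr$, hence $\ones\tr\W_i^{-1}\ones = N$) that supplies this, so I would take care to present it as the load-bearing step rather than a routine remark. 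A secondary point worth double-checking is the eigen-relation $\ones\tr\Cw = \fracN\ones\tr$, which hinges on $\Y\ones = \bvec{0}$; this holds independently of whether the direct or the chained regression of \cref{sec:chain_reg} is used to form $\Y$.
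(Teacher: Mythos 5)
Your proof is correct, and at bottom it takes the same route as the paper's: show that the kernel of $\Omeg_i = \W_i \PiAN + \PiOne$ is trivial by working with the splitting of $\Reals^N$ into $\col(\ones)$ and its orthogonal complement. The difference is one of completeness, and it is in your favour. The paper's proof only verifies $\Omeg_i \bu \neq 0$ separately for $\bu \in \col(\ones)$ (where $\Omeg_i \bu = \bu$) and for $\bu \in \col(\ones)^\perp$ (where $\Omeg_i \bu = \W_i \bu \neq 0$ by \cref{conj:rank_EnRML}); it never addresses a mixed vector $\bu = \alpha \ones + \bu_2$. That step cannot be skipped: for $N=2$ and $\W = \diag(1,-1)$, both of the paper's cases check out, yet $\Omeg = \W \PiAN + \PiOne$ is singular. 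So your observation that invertibility of $\W_i$ alone is \emph{not} sufficient -- one genuinely needs $\ones\tr \W_i^{-1} \ones \neq 0$, equivalently that $\W_i$ maps $\col(\ones)^\perp$ into itself -- is exactly the right diagnosis, and your matrix-determinant-lemma computation makes the dependence explicit.

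The ingredient you supply, the invariant $\ones\tr \W_i = \ones\tr$, is in fact present in the paper, but elsewhere: it is \cref{eqn:W_ones} of \cref{sec:pinv_v} (stated as $\W_i\tr \ones = \ones$ and proved inductively from line \ref{ln:Wk} of \cref{algo:GN_EnRML}, exactly as in your first paragraph), and its consequence $\W_i \PiAN = \PiAN \W_i \PiAN$ \labelcref{eqn:W_proj} is precisely the statement that $\W_i$ preserves $\col(\ones)^\perp$. Neither is cited in the proof of \cref{lemm:Omeg} itself, which is presented \emph{before} \cref{sec:pinv_v}. Your version is therefore the same argument with the load-bearing step made explicit and correctly attributed to the algorithm rather than to the bare hypothesis that $\W_i$ is invertible; the only thing I would add is a pointer to \cref{eqn:W_ones} so the invariant is not proved twice.
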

\begin{proof}
	We show that
	$\Omeg_i \bu \neq 0$ for any $\bu \neq 0$,
	where $\Omeg_i = \W_i \PiAN + \PiOne$.
	For $\bu \in \col(\ones)$: $\Omeg_i \bu = \bu$.
	For $\bu \in \col(\ones)^\perp$:
	$\Omeg_i \bu = \W_i \bu \neq 0 $ (\cref{conj:rank_EnRML}).
\end{proof}

Recall that \cref{eqn:T_PWP} was obtained by inserting $\X_i$ in the expression \labelcref{eqn:T_def} for $\T_i$.
By contrast, the following inserts $\X$ from \cref{eqn:fawef} in the expression \labelcref{eqn:m_2} for $\T_i\pinv$,
yielding
$\T_i\pinv = \X_i\pinv \X = \X_i \X_i \Omeg_i^{-1} = \Pro_{\X_i\tr} \Omeg_i^{-1} = \PiAN \Pro_{\X_i\tr} \Omeg_i^{-1}$,
and hence 
\begin{align}
	\Y_i
	&=
	[\ObsMod(\E_i) \PiAN] \Pro_{\X_i\tr} \Omeg_i^{-1}
	\label{eqn:m_3}
	\, .
\end{align}
Next, it is shown that, under certain conditions,
the projection matrix $\Pro_{\X_i\tr}$ vanishes:
\begin{align}
	\Y_i
	&=
	[\ObsMod(\E_i) \PiAN] \Omeg_i^{-1}
	\label{eqn:m_4}
	\, .
\end{align}
Thereafter, \cref{eqn:Omeg_inv} of \cref{sec:pinv_v} can be used to
write $\Omeg_i^{-1}$ in terms of $(\W_i \PiAN)\pinv$,
reducing \cref{eqn:m_4} to \labelcref{eqn:Yred}.

\subsubsection*{The case of $\compactN \leq M$}
In the case of $\compactN \leq M$, the null space of $\X$ is the range of $\ones$ \citep[with probability 1,][Theorem 3.1.4]{muirhead1982aspects}.
By \cref{lemm:Omeg}, the same applies for $\X_i$, and so $\Pro_{\X_i\tr}$ in \cref{eqn:m_3} reduces to $\PiAN$.
$\square$

\subsubsection*{The case of linearity}
Let $\ObsMat$ be the matrix of the observation model $\ObsMod$, here assumed linear:
$\ObsMod(\E_i) = \ObsMat \E_i$.
By \cref{eqn:m_3},
$\Y_i = \ObsMat \E_i \Pro_{\X_i\tr} \Omeg_i^{-1}$.
But $\E_i \Pro_{\X_i\tr} = \X_i = \E_i \PiAN$.
$\square$

\subsection{Inverse transforms}
\label{sec:pinv_v}
Recall from \cref{eqn:Yk0_def} that $\Y_i \ones = 0$.
Therefore
\begin{align}
	\Cwk^{\pm 1} \ones
	&=
  \cN^{\mp 1} \ones
	\, ,
	\label{eqn:Cw_ones}
\end{align}
where
$\Cwk$ is defined in \cref{eqn:tP}, and
the identity for $\Cwk$ follows from that of $\Cwk^{-1}$.
Similarly, the following identities are valid
also when $\W_i$ and $\W_i^{-1}$ are swapped.
\begin{align}
	\W_i\tr \ones
	&=
	\ones
	\, ;
	\label{eqn:W_ones}
	\\
	\W_i \PiAN
	&=
	\PiAN \W_i \PiAN
	\, ;
	\label{eqn:W_proj}
	\\
	(\W_i \PiAN)\pinv
	&=
	\W_i^{-1} \PiAN
	\label{eqn:W_inv}
	\, .
\end{align}
\Cref{eqn:W_ones} is proven inductively (in $i$)
by inserting \labelcref{eqn:Cw_ones} in line \ref{ln:Wk} of \cref{algo:GN_EnRML}.
It enables showing \labelcref{eqn:W_proj}, using $\PiAN = \I_N - \PiOne$.
This enables showing \labelcref{eqn:W_inv}, similarly to \cref{lemm:T_pinv}.
Note that this implies that $\Y_i \ones = 0$
also for $\Y_i = \ObsMod(\E_i) \, (\W_i \PiAN)\pinv$,
and hence that the identities of this section also hold with this definition.
\Cref{eqn:W_proj,eqn:W_inv} can be used to show (by multiplying with $\Omeg_i$) that
\begin{align}
	\Omeg_i^{-1}
	&=
	(\W_i \PiAN)\pinv + \PiOne
  \, .
	\label{eqn:Omeg_inv}
\end{align}




\subsection*{Acknowledgements}                                                           
The authors thank Dean Oliver, Kristian Fossum, Marc Bocquet, and Pavel Sakov
for their reading and comments,
and Elvar Bjarkason for his questions concerning the computation
of the inverse transform matrix.
This work has been funded by DIGIRES,
a project sponsored by industry partners
and the PETROMAKS2 programme of the Research Council of Norway.

\bibliographystyle{plainnat}                                                            
\bibliography{localrefs}

\end{document}